\theoremstyle{plain}
\newtheorem{theorem}{Theorem}[section]
\newtheorem{corollary}[theorem]{Corollary}
\newtheorem{example}[theorem]{Example}
\newtheorem{proposition}[theorem]{Proposition}
\newtheorem{remark}[theorem]{Remark}
\newcommand{\bft}[1]{\boldsymbol{f}_t}
\newcommand{\bsigmat}[1]{\boldsymbol{\sigma}_t}
\renewcommand{\d}{\,\mathrm{d}}
\newcommand{\eps}{\varepsilon}
\newcommand{\E}{\mathbb{E}}
\renewcommand{\i}{\ensuremath{\mathbbm{1}}}
\newcommand{\leli}{\ensuremath{\textrm{--}}}
\newcommand{\N}{\mathbb{N}}
\renewcommand{\P}{\mathbb{P}}
\newcommand{\R}{\mathbb{R}}
\renewcommand{\tilde}[1]{\widetilde{#1}}
\renewcommand{\quote}[1]{\guillemotleft#1\guillemotright}
\title{Hedging goals}
\date{\today}
\author{Thomas Krabichler}
\address{Eastern Switzerland University of Applied Sciences, Centre for Banking \& Finance}
\curraddr{}
\email{thomas.krabichler@ost.ch}
\thanks{}
\author{Marcus Wunsch}
\address{Zurich University of Applied Sciences, Institute of Asset \& Wealth Management}
\curraddr{}
\email{marcus.wunsch@zhaw.ch}
\thanks{}
\subjclass[2010]{65K99, 91G60, 91G20}
\begin{document}

\maketitle

\begin{abstract}
Goal-based investing is concerned with reaching a monetary investment goal by a given finite deadline, which differs from mean-variance optimization in modern portfolio theory. 
In this article, we expand the close connection between goal-based investing and option hedging that was originally discovered in~\cite{browne_reaching_1999} by allowing for varying degrees of investor risk aversion using lower partial moments of different orders.
Moreover, we show that maximizing the probability of reaching the goal (\textit{quantile hedging}, cf.~\cite{follmer_quantile_1999}) and minimizing the expected shortfall (\textit{efficient hedging}, cf.~\cite{follmer_efficient_2000}) yield, in fact, the same optimal investment policy. 
We furthermore present an innovative and model-free approach to goal-based investing using methods of reinforcement learning. 
To the best of our knowledge, we offer the first algorithmic approach to goal-based investing that can find optimal solutions in the presence of transaction costs. 
\end{abstract}

\section{Introduction}
\subsection{Motivation}
While modern portfolio theory~\cite{markowitz_portfolio_1952} posits that investors are risk-averse and thus should seek to maximize their portfolios' risk-adjusted returns, in reality investor often find themselves in need of capital to finance future investment goals: a car, an apartment, or their children's college education. 
The importance of investment goals on a societal level can be appreciated in view of the exacerbating retirement problem in many Western countries, cf.~\cite{giron_applying_2018}. 
Goal-based investment strategies are not primarily concerned with risk preferences relating to portfolio volatility; instead, 
they are subject to the risk of falling short of reaching a goal by its maturity. 
Even exceeding an investment goal is not necessarily desirable; in this case, a strategy with less volatility could have led to an outcome matching the investment goal.

There are at least two ways to translate this practical problem into a mathematical optimization problem. 
Either, one attempts to \textit{maximize} the probability of reaching an investment goal by a given maturity, or one tries to \textit{minimize} the expected shortfall (or a function thereof). 

\subsection{Literature review}
This first approach was investigated in a series of papers by Browne (cf.~\cite{browne_reaching_1999} and the references therein), who found the explicit portfolio allocation formula for the probability-maximizing strategy in the context of complete markets. 
In his articles, Browne used techniques from stochastic control theory as well as from PDEs. 
While highly appealing theoretically, the probability-maximizing paradigm suffers from the binary nature of its optimum: a goal missed by a hair's breadth is still a goal missed, and any such strategy will be discarded.
Rather, more and more leverage will be applied to attain the goal - even as the maturity draws closer -, resulting in either success or bankruptcy. 
This indifference for the \textit{size} of the shortfall constitutes a major drawback of probability-maximizing strategies for practical purposes. 

Leukert~\cite{leukert_absicherungsstrategien_1999} and F{\"o}llmer and Leukert (cf.~\cite{follmer_quantile_1999, follmer_efficient_2000, follmer_stochastic_2016}) treated the closely related problem of maximizing the probability of hedging contingent claims successfully when replication is attempted with less than the required initial capital (corresponding to the discounted value under the equivalent martingale measure). 
Their solution is based on a static optimization problem of Neyman-Pearson type. Another approach can be found in \cite{spivak_maximizing_1999}.

In practice, measuring and minimizing downward risk is arguably more significant than maximizing the probability of attaining a goal (in analogy with the dichotomy of Expected Shortfall versus Value-at-Risk, cf.~\cite{leukert_absicherungsstrategien_1999, follmer_stochastic_2016}).
Downward risk can be quantified by the shortfall, i.e., the positive part of the distance between the profit a strategy has earned at maturity and the goal. 
Several authors have addressed this problem in the context of replicating contingent claims, cf.~\cite{leukert_absicherungsstrategien_1999, follmer_quantile_1999, follmer_efficient_2000, pham_minimizing_2002, follmer_stochastic_2016}, including Cvitani{\'c} and Karatzas~\cite{cvitanic_dynamic_1999}. 
The latter authors employ tools from convex duality to show that a solution exists, and state explicit solutions for several special cases with a single risky asset. 
It is also interesting to note that quantile hedging~\cite{follmer_quantile_1999}, i.e., the probability-maximizing paradigm, can be interpreted as the most risk-seeking limit of efficient hedging, cf.~\cite{follmer_efficient_2000}. 
Nakano~\cite{nakano_efficient_2004} studied a similar problem, however considering coherent risk measures instead of lower partial moments (cf.~\cite{follmer_efficient_2000}).  

An intriguing and novel approach via optimal transport has recently been used to target prescribed terminal wealth distributions in~\cite{guo_portfolio_2021}. 


\cite{buehler_deep_2019} introduced a flexible framework for hedging contingent claims by applying deep learning methods. This approach transcends the classical Black-Scholes model's restrictions, e.g., the absence of transaction costs. 
Related reinforcement learning approaches can be found in~\cite{halperin_qlbs_2017, szehr_hedging_2021}. \cite{ruf_neural_2020} provides a comprehensive literature review regarding the application of neural networks for pricing and hedging purposes.

\section{Main contributions}
In our opinion, the potential that goal-based investing has for retirement saving and individual asset-liability-management cannot be overestimated. 

The theoretical foundations for the goal-based investment problem have been laid out in the - superficially unrelated - field of replicating contingent claims. 
Therefore, we regard adapting these results and making them accessible and palatable to practitioners as one of the main contributions of this paper. 
In particular, we show how risk preferences can be integrated into the original goal-based investment problem (cf., e.g., Proposition~\ref{prop:lowmom_ndim}), drawing on results on efficient hedging derived by F{\"o}llmer and Leukert~\cite{follmer_efficient_2000}. 

Another important contribution is the use of deep hedging techniques to deal with goal-based investing in a model-agnostic fashion. 
To the best of our knowledge, this is the first instance that transaction costs are incorporated into the optimization problems arising in goal-based investing. 

\section{Outline of this paper}


The remainder of this article is organized as follows. 
\\
After introducing the basic model in Section~\ref{sec:prel}, we state the optimal policy for risk-neutral and risk-taking goal-based investors in Section~\ref{sec:risk_neut}. 
The optimal policy for risk-averse goal-based investors, whose utility is determined by a lower partial moment of the shortfall relative to the goal, can be found in Section~\ref{sec:risk_aver}. 
We discuss shortcomings of the probability-maximizing paradigm in Section~\ref{sec:prac_impl}, where we also provide an illustrative example. 
To mitigate the risk inherent in the quantile and efficient hedging approaches, we propose a policy allowing for downward protection in Section~\ref{sec:down_prot}. 
\\
Finally, in Section~\ref{sec:deep_hedg}, we show that an artificial neural network can be trained to minimize the expected shortfall as well as lower partial moments, thereby approximating the optimal policies from Sections \ref{sec:risk_neut} and \ref{sec:risk_aver}. 
\\ 
The proofs of this paper can be found in Section~\ref{sec:proo} of the Appendix.

\begin{remark}
The explicit analytical results in Sections \ref{sec:risk_neut} - \ref{sec:down_prot} build upon the work in~\cite{browne_reaching_1999,follmer_quantile_1999, follmer_efficient_2000}. 
In particular, the validity of our analytical results is restricted to complete markets. \cite{follmer_quantile_1999, follmer_efficient_2000} also elaborate on the incomplete case based on duality results. 
Deep Hedging in Section~\ref{sec:deep_hedg} provides an appealing and highly flexible approach, as it is model-free and allows for the inclusion of transaction costs. 
\end{remark}

\section{Preliminaries}\label{sec:prel}
\subsection{The model}
We consider a \textit{complete} market with $n \in \mathbb{N}$ correlated risky assets generated by $n$ independent Brownian motions (cf.~\cite{browne_reaching_1999}), i.e.,
\begin{align}\label{eq:xit}
    \d X_t^{(i)} &= X_t^{(i)} \left[ \mu^{(i)}\d t + \sum_{j=1}^n \sigma^{(i,j)} \d W_t^{(j)} \right], \qquad i = 1, \dots, n, 
\end{align}
where the drift $\pmb \mu = \left(\mu^{(i)}\right)_{i=1}^n$ and the full rank volatility matrix $\pmb \sigma = \left(\sigma^{(i,j)}\right)_{i, j=1}^n$ are constant. 
$$\pmb{W}_t := \left(W_t^{(1)},\dots, W_t^{(n)}\right)^\top$$ shall denote a standard $n$-dimensional Brownian motion defined on the complete probability space $(\Omega, \mathcal{F}, \P)$
satisfying the usual conditions. 
\\
We assume that there is, in addition, a risk-less bank account compounding at the risk-free rate $r>0$, i.e.,
\begin{align} \label{eq:dBt}
    \d B_t=r\, B_t \d t,\qquad B_0=1.
\end{align}
The value of a zero-coupon bond at time $t$ that pays $1$ monetary unit at maturity $T > t \geq 0$ will be denoted as
\begin{align*}
    R_{t, T} &:= \exp\left\{ - r \ (T-t) \right\}. 
\end{align*}
We will only consider bonds without default risk. 
Monetary goals will be denoted by $H \in \R_+$ throughout. To ease notation, we shall write $H_{t, T} := R_{t, T} H$ for any $t \in [0, T]$. 
\\
We will make use of the diffusion matrix 
$\pmb \Sigma := \pmb \sigma \, {\pmb {\sigma}}^\top;$
the \textit{market price of risk} will be denoted by the vector $\pmb \vartheta$ defined as 
\begin{align}\label{eq:market_price_of_risk}
    \pmb \vartheta := {\pmb \sigma}^{-1} (\pmb \mu - r\pmb 1). 
\end{align}
We assume that all entries of $\pmb \vartheta$ are strictly positive. 
According to Girsanov's theorem, the vector process defined via the market price of risk as
\begin{align*}
    \pmb W^*_t &:= \pmb W_t + \pmb \vartheta \ t
\end{align*}
is an $n$-dimensional Brownian motion under the probability measure $\P^*$ given by its Radon-Nikodym derivative 
\begin{align}\label{eq:radon}
    \rho_* := \frac{\d \P^*}{\d \P\phantom{^*}} &= \exp\left\{ - {\pmb \vartheta}^\top \pmb W_T -\frac{1}{2} {\pmb \vartheta}^\top \pmb \vartheta\ T \right\}
    = \exp\left\{ - {\pmb \vartheta}^\top \pmb W^*_T +\frac{1}{2} {\pmb \vartheta}^\top \pmb \vartheta \ T \right\}, 
\end{align}
where $\P$ denotes the objective probability measure.  
The expectation under the risk-neutral measure $\P^*$ will be denoted as $\E^*$. 
\\
The optimal growth portfolio (cf.~\cite{platen_role_2005}), which is also known as the \textit{market portfolio}, maximizes the growth rate of wealth; cf.~\cite[][Subsection 4.2]{browne_reaching_1999}. 
Its weights $\pmb{\pi}_*$ and its variance $\sigma_*$ are determined as 
\[
    \pmb{\pi}_* := \left(\pmb{\sigma}^{-1}\right)^\top\, \pmb{\vartheta}_t,\qquad
    {\sigma_*}^2 := {\pmb{\pi}_*}^\top \, \pmb{\Sigma}\, \pmb{\pi}_* = \pmb{\vartheta}^\top \pmb{\vartheta} = \sum_{i=1}^n \left(\frac{\mu^{(i)}-r}{\sigma^{(i,i)}}\right)^2.
\]
The optimal growth portfolio evolves as (cf.~\cite[][Subsection 4.2]{browne_reaching_1999})
\begin{align*}
    \Pi_t &= \Pi_0 \exp\left\{ \left(r - \frac{1}{2} {\sigma_*}^2 \right) t + {\pmb\vartheta}^\top \ \pmb{W}_t^* \right\}.
\end{align*}

\begin{remark}
For ease of notation, we use constant coefficients throughout this article. It is, however, straightforward to generalize all our results to deterministic time-dependent coefficients. 
\end{remark}

\subsection{Goal-based investing \& hedging}
The goal-based investment problem can be expressed in terms of replicating a contingent claim with a constant payoff at maturity $T>0$ given by $H>0$, cf.~\cite{browne_reaching_1999}.
It is thus equivalent to finding an admissible strategy $(V_0, \,\pmb \xi)$, given for $t\in[0,T]$ by 
\begin{align}\label{eq:Vt}
    V_t = V_0 + \int_0^t {\pmb \xi_s}^\top \d \pmb W_s,
\end{align}
where $\pmb \xi$ is a predictable process with respect to the Brownian motion $\pmb W$ such that 
\begin{align}\label{eq:expsf}
    \mathbb{E}[ \ell ( (H - V_T)_+ ) ] 
\end{align}
becomes \textit{minimal}. Here, the expectation $\E$ is taken under the objective probability measure $\P$, and $\ell$ denotes a loss function that expresses the risk appetite of the investor. 
We will consider loss functions of the type \begin{align}\label{eq:elloss}
    \ell_p(x) = x^p, \quad p \in \R_{\ge 0}.
\end{align} 
For these loss functions, the expression \eqref{eq:expsf} is referred to as the \emph{lower partial moment of order} $p$.
Note that, as $p\rightarrow 0+$, the integrand in \eqref{eq:expsf} tends to the indicator function $\mathbbm{1}_{(0, H)}(V_T)$. 
This situation is tantamount to quantile hedging as discussed in~\cite{follmer_quantile_1999}. 
Conversely, risk aversion increases as $p\rightarrow\infty$.
\\
Let us assume that the investor initially posts the amount $V_0 = z > 0$. 
If $z$ is such that $z \ge H_{0, T}$, then the zero-coupon bond can be perfectly replicated at no risk, and the expected loss \eqref{eq:expsf} vanishes. 
On the other hand, if $z < H_{0, T}$, then the investor faces the risk of falling short of her desired goal. 

\section{Risk Neutrality and Risk Taking}\label{sec:risk_neut}
The policy minimizing the expected shortfall for hedging a zero-coupon bond paying out $H\equiv 1$ at maturity was derived in~\cite{xu_minimizing_2004}. 
In what follows, we extend her approach to incorporate a constant risk-free rate $r>0$ and an arbitrary constant payoff $H\in\R_+$ such that $z < H_{0, T}$.
Moreover, we show that the result of \cite[][Section 2.2.1]{xu_minimizing_2004} is, in fact, equivalent to the one of \cite{browne_reaching_1999} for $H \equiv 1$. 
In particular, the hedging strategy in the case of a single risky asset is indeed independent of its drift, which is not immediately obvious from the formulae stated in \cite{xu_minimizing_2004}. 
\begin{remark}
In the following discussion, we treat the entire spectrum of risk appetites ranging from risk neutrality ($p=1$, also referred to as \emph{efficient hedging}) to extreme risk taking ($p=0$, also referred to as \emph{quantile hedging}).
The theoretical foundations can be found in Subsection~5.4 of \cite{follmer_efficient_2000}. 
The discussion in Section~\ref{sec:risk_aver} will address higher degrees of risk aversion by considering lower partial moments of order $p > 1$. 
\end{remark}
\begin{proposition}[Efficient hedging using several risky assets]\label{prop:effhed_ndim}
Consider an investment with an initial capital endowment of $z$ monetary units, whose objective is to minimize 
\[
    \E\left[{(H-V_T)_+}^p\right], \qquad H \in \R_+, \qquad p \in [0, 1].
\]
Then the optimal policy for this objective is equivalent to the replication of a European digital call option on the optimal growth portfolio $\Pi_t$ with payoff $H$ and strike price $K^*$, where 
\begin{align}\label{eq:strike_ndim}
    K^* &= \Pi_0\; \exp\left\{ \left( r-\frac{1}{2}{\sigma_*}^2\right)\,T - \sigma_* \ \sqrt{T}\ \Phi^{-1}\left(\frac{z}{H_{0, T}}\right) \right\}, 
\end{align}
$\Phi$ denotes the cumulative distribution function of the standard normal distribution, and $\Phi^{-1}$ the corresponding quantile function. \\
In particular, the investor's wealth process can be expressed by means of
\begin{align}\label{eq:digicall_ndim}
    V_t &= H_{t, T}\, \Phi\left(\frac{\log \frac{\Pi_t}{K^*} + \left( r-\frac{1}{2}{\sigma_*}^2\right)\,(T-t)}{\sigma_* \sqrt{T-t}}\right).
\end{align}
\end{proposition}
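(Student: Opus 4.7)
The plan is to follow the Föllmer--Leukert efficient hedging program: reformulate the dynamic hedging problem statically via martingale representation, identify the optimal terminal payoff through a Neyman--Pearson argument, and then translate the abstract optimizer into the stated formulas using the explicit lognormal laws of $\Pi_T$.

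\textbf{Step 1 (Static reformulation).} Market completeness together with the martingale representation theorem puts admissible self-financing strategies with $V_0 = z$ in bijection with non-negative $\mathcal{F}_T$-measurable terminal payoffs $V_T$ satisfying the budget constraint $\E^*[R_{0,T} V_T] \le z$. Since $(H-\cdot)_+^p$ is non-increasing on $[0,H]$, truncating at $H$ can only relax the budget without worsening the objective, so we may restrict to $V_T \in [0,H]$. Writing $V_T = H\varphi$ with $\varphi : \Omega \to [0,1]$, the problem becomes: minimize $H^p\,\E[(1-\varphi)^p]$ subject to $\E^*[\varphi] \le z/H_{0,T}$.

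\textbf{Step 2 (Binary optimum).} I will argue that the infimum is attained by an indicator $\varphi = \mathbbm{1}_A$. For $p=1$ the functional is linear and the classical Neyman--Pearson lemma gives a bang-bang optimizer directly. For $p=0$ this is exactly quantile hedging as in~\cite{follmer_quantile_1999}. For $p \in (0,1)$, the functional $\varphi \mapsto \E[(1-\varphi)^p]$ is strictly concave on the convex set of admissible $\varphi$, so its infimum is attained on the extreme points; under the atomless measure $\P^*$ these extreme points are precisely the $\{0,1\}$-valued functions. (Equivalently, a coupling argument using an independent uniform randomizer replaces any $\varphi$ by an indicator with the same $\P^*$-mean and a smaller or equal objective.) In every case the optimal terminal payoff has the knock-out form $V_T^\star = H\,\mathbbm{1}_A$, for which the objective collapses to $H^p(1-\P(A))$, exhibiting the $p$-independence promised in the introduction.

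\textbf{Step 3 (Neyman--Pearson set).} Maximizing $\P(A)$ subject to $\P^*(A) \le z/H_{0,T}$ is the textbook Neyman--Pearson problem; its solution is the likelihood-ratio set $A^\star = \{\d\P/\d\P^* \ge c\} = \{\rho_* \le 1/c\}$, with $c$ chosen so that $\P^*(A^\star) = z/H_{0,T}$. Plugging in $\rho_* = \exp\{-\pmb\vartheta^\top \pmb W_T - \frac{1}{2}\pmb\vartheta^\top\pmb\vartheta\,T\}$ and the explicit formula for $\Pi_T$ from Section~\ref{sec:prel}, one verifies that $\rho_*^{-1}$ is a strictly increasing function of $\Pi_T$, so $A^\star = \{\Pi_T \ge K^\star\}$ for some $K^\star > 0$. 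Since $\log \Pi_T \sim \mathcal{N}\!\left(\log\Pi_0 + (r-\tfrac12\sigma_*^2)T,\ \sigma_*^2 T\right)$ under $\P^*$, setting $\P^*(\Pi_T \ge K^\star) = z/H_{0,T}$ and solving gives exactly~\eqref{eq:strike_ndim}.

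\textbf{Step 4 (Replicating wealth).} The risk-neutral pricing formula yields $V_t = R_{t,T}\,\E^*[H\,\mathbbm{1}_{\{\Pi_T \ge K^\star\}} \mid \mathcal{F}_t]$; the Markov property of $\Pi$ under $\P^*$ together with the lognormal computation used in Step~3 produces~\eqref{eq:digicall_ndim}. The main obstacle is Step~2 in the range $p\in(0,1)$: establishing rigorously that the concave minimization admits a bang-bang optimizer (either via extreme-point structure or via a randomization/Jensen coupling) is what ties together the quantile-hedging and efficient-hedging paradigms and justifies treating the whole range $p \in [0,1]$ uniformly.
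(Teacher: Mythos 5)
Your proposal is correct and follows essentially the same route as the paper: reduce to a static Neyman--Pearson problem for a randomized test $\varphi$, identify the optimizer as the likelihood-ratio set $\{\d\P/\d\P^*\ge c\}=\{\Pi_T\ge K^*\}$, and read off $K^*$ and $V_t$ from the lognormal law of $\Pi_T$ under $\P^*$. The only substantive difference is that the paper imports the bang-bang structure of the optimizer from Leukert's Theorem~9 as a black box, whereas you sketch a proof of it; there your extreme-point appeal for $p\in(0,1)$ is the one soft spot (Bauer-type attainment needs a semicontinuity hypothesis that $\varphi\mapsto\E[(1-\varphi)^p]$ does not obviously satisfy in the weak-$*$ topology), but your parenthetical coupling argument --- $(1-\varphi)^p\ge 1-\varphi$ on $[0,1]$ for $p\le 1$, so a randomized indicator with the same $\P^*$-mean does at least as well, and the atomless distribution of $\rho_*$ lets the Neyman--Pearson test be a pure indicator --- closes that gap rigorously.
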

\begin{remark}
Note that, if $z = H_{0, T}$, then the strike $K^*$ given in  \eqref{eq:strike_ndim} will vanish. As a consequence, the value of the standard normal distribution function $\Phi$ in \eqref{eq:digicall_ndim} will be 1, so that the claim reduces to a risk-less bond, $V_t = H_{t, T}$. 
If $z$ is even larger than the discounted goal, compounding will result in super-replication. 
\end{remark}
\begin{corollary}[Efficient hedging using a single risky asset]\label{cor:effhed_1dim}
In the case of a single risky asset, the contingent claim \eqref{eq:digicall_ndim} can be simplified to 
\begin{align*}
    V_t &= H_{t, T}\, \Phi\left(\frac{\log \frac{X_t}{K^*} + \left( r-\frac{1}{2}{\sigma}^2\right)\,(T-t)}{\sigma \sqrt{T-t}}\right),   
\end{align*}
where 
\begin{align*}
    K^* &= x_0\, \exp\left\{ \left(r-\frac{1}{2}\sigma^2\right) T - \Phi^{-1}\left( \frac{z}{H_{0, T}} \right) \right\}. 
\end{align*}
The corresponding delta-hedging strategy is obtained by differentiation: 
\[
    \xi_1(t, X_t) 
    = 
    \frac{\partial}{\partial x}\bigg \rvert_{x = X_t}V_t
    = \frac{H_{t, T}}{X_t\ \sigma\sqrt{T-t}}\ \phi\left( \frac{\log \frac{X_t}{K^*} + \left(r-\frac{\sigma^2}{2}\right)(T-t)}{\sigma \sqrt{T-t}} \right), 
\]
where $\phi$ denotes the probability density function of the standard normal distribution. 
\end{corollary}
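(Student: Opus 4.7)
The plan is to specialize Proposition~\ref{prop:effhed_ndim} to the case $n=1$ and to re-express the digital-call formula~\eqref{eq:digicall_ndim} in terms of the single risky asset $X_t$ instead of the optimal growth portfolio $\Pi_t$; the delta then follows by straightforward differentiation of the resulting closed-form value function.

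First, I would collect the one-dimensional analogues of the quantities appearing in Proposition~\ref{prop:effhed_ndim}: the market price of risk reduces to the scalar $\vartheta=(\mu-r)/\sigma$, and hence $\sigma_*^2=\vartheta^2$, so $\sigma_*=\vartheta$. Under $\P^*$, both $X_t$ and $\Pi_t$ are geometric Brownian motions driven by the same Brownian motion $W_t^*$, so
\[
    X_t = x_0\exp\{(r-\sigma^2/2)\,t + \sigma W_t^*\},\qquad
    \Pi_t = \Pi_0\exp\{(r-\vartheta^2/2)\,t + \vartheta W_t^*\}.
\]
I would then eliminate $W_t^*$ by solving the first display for $W_t^*$ in terms of $(t,X_t)$ and substituting into the second. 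Plugging the resulting expression for $\Pi_t$ into the numerator of the argument of $\Phi$ in~\eqref{eq:digicall_ndim}, all contributions proportional to $\vartheta$ (equivalently, all explicit dependence on the drift $\mu$) cancel against the factor $\sigma_*=\vartheta$ appearing in the denominator. What remains is the Black--Scholes-type argument asserted in the corollary, and matching the constant produced by the substitution against~\eqref{eq:strike_ndim} yields the stated expression for $K^*$.

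The delta-hedging formula is then obtained by applying $\partial/\partial x$ to $V_t$ viewed as a function of $(t,X_t)$. Since $V_t$ takes precisely the form of the value of a European digital (cash-or-nothing) call on $X_t$ with strike $K^*$ and notional $H_{t,T}$, the chain rule produces the standard Gaussian density $\phi$ evaluated at the argument of $\Phi$, multiplied by $1/(X_t\,\sigma\sqrt{T-t})$, which is exactly the stated $\xi_1(t,X_t)$.

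The main subtlety, and the step I would be most careful with, is the algebraic simplification in the second step: one has to keep track of the ratios $\vartheta/\sigma$ that arise when expressing $\Pi_t$ through $X_t$ and verify that they conspire to eliminate the drift $\mu$ from both the pricing and hedging formulae. Once this cancellation is carried out, the drift-independence emphasized in the discussion preceding Proposition~\ref{prop:effhed_ndim} — which is implicit but not immediately visible in~\cite{xu_minimizing_2004} — becomes manifest, and the corollary is seen to be consistent with the single-asset result of~\cite{browne_reaching_1999}.
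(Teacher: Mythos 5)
Your proposal is correct, and it takes a genuinely different route from the paper's. The paper does not derive the corollary by specializing Proposition~\ref{prop:effhed_ndim}; instead it reruns the whole one\leli{}dimensional argument from scratch: it writes the Neyman--Pearson test function as $\varphi_p=\mathbbm{1}\{\rho_*\le H^{p-1}/a_p\}$, establishes the representation $\rho_*=k\,X_T^{-\alpha}$ with $\alpha=(\mu-r)/\sigma^2$, and uses it to convert the success set into $\{X_T\ge K^*\}$ — that identity is where the drift cancels in the paper's version — before computing $\E^*[\varphi_p H\mid\mathcal F_t]$ via the density process. Your route instead takes the $n$-dimensional digital call on $\Pi_t$ as given and eliminates the common driving Brownian motion $W^*$ between the two geometric Brownian motions $X$ and $\Pi$; the cancellation of $\vartheta=\sigma_*$ in the argument of $\Phi$ then does the same work. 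Both are valid; yours is shorter given that Proposition~\ref{prop:effhed_ndim} is already proved, while the paper's self-contained computation makes the link to \cite{xu_minimizing_2004} explicit. Two small points to be careful about. First, passing from the event $\{\Pi_T\ge K^*_\Pi\}$ to $\{X_T\ge K^*\}$ preserves the direction of the inequality only because $\vartheta>0$ (which the paper assumes); if $\mu<r$ the digital call on $\Pi$ would become a digital \emph{put} on $X$, so this hypothesis should be invoked. Second, carrying out your substitution actually yields
\[
K^*=x_0\,\exp\left\{\left(r-\tfrac{1}{2}\sigma^2\right)T-\sigma\sqrt{T}\,\Phi^{-1}\left(\tfrac{z}{H_{0,T}}\right)\right\},
\]
with the factor $\sigma\sqrt{T}$ multiplying $\Phi^{-1}$; the expression displayed in the corollary statement omits this factor, which is a typo in the statement (the paper's own appendix proof also produces the factor $\sigma\sqrt{T}$), not a defect of your argument.
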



\begin{corollary}
In the case of a constant claim $H \in \R_+$, the optimal policies for quantile hedging~\cite{follmer_quantile_1999} and efficient hedging~\cite{follmer_efficient_2000} coincide. 
\\
In particular, \cite[][Corollary 2.8]{xu_minimizing_2004} concerning the efficient hedging of a bond with payoff $H\equiv 1$ yields the same optimal policy as \cite[][Proposition 4.1]{browne_reaching_1999} with goal $b\equiv 1$ and vanishing risk-free rate. 
\end{corollary}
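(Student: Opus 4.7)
The corollary is essentially a reading of Proposition~\ref{prop:effhed_ndim} together with a bookkeeping check against Browne's formula, so the plan is to exploit the fact that the strike $K^{*}$ in \eqref{eq:strike_ndim} and the wealth process in \eqref{eq:digicall_ndim} depend on the data only through $z$, $H_{0,T}$, $\Pi_{0}$, $r$, $\sigma_{*}$ and $T$, and crucially \textbf{not} on the exponent $p\in[0,1]$. The first assertion will therefore follow from a monotonicity/limit remark, and the second from a direct substitution.

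For the first part, I would argue as follows. Proposition~\ref{prop:effhed_ndim} gives, for every $p\in(0,1]$, the same optimal strategy, namely replication of a digital call on $\Pi_{t}$ with payoff $H$ and strike $K^{*}$ determined by the initial endowment $z$ and the discounted goal $H_{0,T}$. Since none of the ingredients $K^{*}$, $\Pi_{t}$, $\sigma_{*}$ etc.\ depends on $p$, the optimizer is literally identical across the whole range $p\in(0,1]$, including the efficient-hedging case $p=1$. For the quantile-hedging limit $p\to0+$, I would invoke the remark made just after \eqref{eq:elloss}: the integrand $((H-V_{T})_{+})^{p}$ converges pointwise to $\mathbbm{1}_{(0,H)}(V_{T})$, so minimizing the lower partial moment of order $p$ degenerates to minimizing $\P(V_{T}<H)$, i.e.\ to quantile hedging in the sense of \cite{follmer_quantile_1999}. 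Since the minimizer for each $p>0$ is the same strategy, and since that strategy is itself of the digital-call type (so that $V_{T}\in\{0,H\}$ almost surely and the shortfall loss collapses to an indicator anyway), it must also solve the limiting quantile-hedging problem. This shows that quantile hedging and efficient hedging produce the same optimal policy whenever the claim is a constant $H\in\R_{+}$.

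For the second part I would simply specialize the formulas in Proposition~\ref{prop:effhed_ndim} to the setting of \cite{browne_reaching_1999}, that is $H\equiv 1$, $r=0$, so that $H_{0,T}=R_{0,T}=1$. Then \eqref{eq:strike_ndim} reduces to
\begin{align*}
K^{*} \;=\; \Pi_{0}\,\exp\!\left\{-\tfrac{1}{2}\sigma_{*}^{2}\,T \;-\; \sigma_{*}\sqrt{T}\;\Phi^{-1}(z)\right\},
\end{align*}
and \eqref{eq:digicall_ndim} becomes
\begin{align*}
V_{t} \;=\; \Phi\!\left(\frac{\log(\Pi_{t}/K^{*})-\tfrac{1}{2}\sigma_{*}^{2}(T-t)}{\sigma_{*}\sqrt{T-t}}\right).
\end{align*}
This is exactly Browne's probability-maximizing value process from \cite[Proposition~4.1]{browne_reaching_1999} with $b\equiv1$, which can be verified by inspection; equivalently, one checks that $V_{0}=\Phi(\Phi^{-1}(z))=z$ matches the initial endowment and that $V_{T}=H\,\mathbbm{1}_{\{\Pi_{T}>K^{*}\}}$, recovering the digital-call payoff used by Browne.

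The only delicate point in the plan is the $p\to 0+$ step, because one should make sure that the limit of the minimizers is actually a minimizer for the limiting functional; the cleanest way around this is to rely on the fact noted above that the candidate strategy is digital, so that $((H-V_{T})_{+})^{p}=H^{p}\mathbbm{1}_{\{V_{T}=0\}}$ for every $p>0$, which collapses the family of optimization problems into a single one and removes any interchange-of-limits concern. Everything else is algebraic verification.
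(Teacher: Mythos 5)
Your proposal is correct and matches the paper's (implicit) argument: the paper gives no separate proof of this corollary but reads it off from Proposition~\ref{prop:effhed_ndim}, whose optimal policy \eqref{eq:strike_ndim}--\eqref{eq:digicall_ndim} is manifestly independent of $p$ on the whole range $[0,1]$ and whose proof already ends by identifying the solution with Browne's probability-maximizing strategy. The only superfluous element is your detour through the limit $p\to 0+$: since Proposition~\ref{prop:effhed_ndim} is stated for $p\in[0,1]$ including $p=0$, quantile hedging is covered directly, and the somewhat delicate limit-of-minimizers step (which as written only controls the candidate's objective value, not that of arbitrary competitors) is not actually needed.
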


\section{Practical Considerations when Maximizing Probabilities} \label{sec:prac_impl}

\subsection{The Classical Notion of Goal-Based Investing}
Let us assume that the investment universe consist of a single risky company share $X=(X_t)_{t\in[0, T]}$ and a risk-free bank account $B=(B_t)_{t\in[0, T]}$, cf.~\eqref{eq:xit}, \eqref{eq:dBt}. A digital (or binary) European call option on the underlying $X$ with strike $K>0$ is a financial derivative with payoff $\i_{\{X_T\geq K\}}$ at maturity $T$. Its Black-Scholes price is given by
\begin{align}\label{eq:bs_price_digital}
    C(t;X_t,K) &=R_{t,T}\ \Phi(d_-(t;X_t,K))
    \\
    d_-(t;x,K)&:=\frac{\log{\frac{x}{K}}+\left(r-\frac{\sigma^2}{2}\right)(T-t)}{\sigma\sqrt{T-t}}. \nonumber
\end{align}
According to Corollary~\ref{cor:effhed_1dim} (cf.~\cite[][Section 4]{browne_reaching_1999}), continuous re-balancing with
\[
\xi(t; X_t, K)=\frac{R_{t, T}}{X_t\, \sigma\, \sqrt{T-t}}\ \phi\left(d_-(t;X_t,K)\right)
\]
replicates this digital payoff starting from $V_0=C(0;X_0,K)$
monetary units. 
By inspection, the initial price $V_0=V_0(K)$ is monotonously decreasing with
\[\lim_{K\to0+}V_0(K)=R_{0, T},\qquad\lim_{K\to\infty}V_0(K)=0.\]

Let us assume that a financial investor owns $c_0>0$ monetary units at time $t=0$ and, by means of an admissible 
strategy in the investment universe, aims at owning 
$c_T > c_0$ monetary units at time $T$. 
For simplicity, let us exclude intermediate income and consumption. 
In order to ensure that the mathematical problem is well-posed, one needs to establish in what sense a certain strategy becomes optimal. 
In~\cite[][Theorem 3.1]{browne_reaching_1999}, Browne proved the intriguing fact that replicating $c_T$ digital call options with strike
\[
K^*=X_0 \exp\left\{\left(r-\frac{1}{2}\sigma^2\right)T-\sigma\,\sqrt{T}\,\Phi^{-1}\left(\frac{c_0}{R_{0, T}\,c_T}\right)\right\}
\]
maximizes the objective probability of reaching the goal.
This result has an insightful economic interpretation; $K^*$ coincides with the break-even point with respect to the strike where a single digital call option costs $\frac{c_0}{c_T}$ at time $0$. Notably, but also well known, the magnitude of the hardly ascertainable drift $\mu$ does not affect $K^*$. In fact, the above expression of $K^*$ is only well-defined provided that the argument of $\Phi^{-1}$ is within $(0,1]$. In our setting, this prerequisite is only violated in the degenerate case $c_0 \geq R_{0, T}\,c_T$, i.e., the goal can be super-replicated in terms of the bank account at no risk anyway. 
The maximized real-world probability of reaching the goal is
\[
\P\left[X_T\geq K^*\right]=\Phi\left(\vartheta \sqrt{T}+\Phi^{-1}\left(\frac{c_0}{R_{0, T}\,c_T}\right)\right).
\]

For real-world applications, the financial investor has two alternatives; either she buys it over-the-counter or she replicates the digital payoff herself. In the former case, she runs the risk of not getting the promised payoff due to the bankruptcy of the issuer. In the latter case, without further stop-loss measures in place, discrete re-balancing schedules imply the risk of arbitrarily large losses way beyond $c_0$ due to the discontinuity of the payoff and, hence, the unbounded delta of the digital option. Notably, the strategy also requires an unlimited credit line at the bank which is collateralized only to an insufficient extent by the company share. Transaction costs exacerbate the situation. By approximating the digital payoff by a classical bull call spread and by diversifying the involved derivatives across several bona fide counterparties, the financial investor manages to deal with the mentioned impediments all the same. 
From a computational perspective, we lose analytical tractability with increasing degrees of complexity, e.g., additional constraints, more realistic price dynamics, transaction costs, etc. 
Despite all, and much more crucially, the \quote{all-or-nothing} feature of the proposed optimal strategy is not feasible in many real-world applications such as traditional pension funds. For obvious reasons, retirement savings are not supposed to be a Bernoulli experiment. 
Therefore, we will consider further ways to control downward risk in Section~\ref{sec:down_prot}.

\begin{example}
Let us consider a simple one-step financial market that hosts two financial assets over the time horizon $t\in\{0,1\}$. For some $0<\eps\ll 1$, a risk-free bank account carries a deterministic log-return of $r-\eps$ for some $r\in\R$. The other investment alternative is a start-up company whose success is dichotomous; the log-return $\tilde{r}$ of the company share satisfies $\P[\tilde{r}=r-1]=p$ and $\P[\tilde{r}=r+1]=1-p$ for some $p\in(0,1)$. Let $\xi\in[0,1]$ denote the portion of the initial wealth that is kept in the risky asset. The log-return of any strategy $\xi$ is then given by $R(\xi)=\log\big(\xi e^{\tilde{r}}+(1-\xi)e^{r-\eps}\big)$. From a practitioner's perspective, if the investor's ultimate goal was to reach a continuously compounded yield of $r$, then it would not be advisable to invest in the risky asset at all. However, a strict application of maximising the probability of reaching the goal would involve shortfall risk. Indeed, it holds $\P\left[R(0)\geq r\right]=0$, whereas $\P\left[R(\xi)\geq r\right]$ is maximal for any
\[
\xi\geq\frac{e^\eps-1}{e^{1+\eps}-1}.
\]
\end{example}

This example shows that the probability-maximizing paradigm might be too rigid in the context of goal-based investing as it does not take into consideration the investor's risk appetite. 
In the next section, we will discuss optimal policies for risk-averse investors. 

\section{Risk Aversion} \label{sec:risk_aver}
We consider the case of $p>1$, so that $(\ell_p)_{p>1}$ (cf.~\eqref{eq:elloss}) denotes a series of convex loss functions corresponding to increasing levels of risk aversion as $p$ grows. 
According to~\cite[][Lemma 11]{leukert_absicherungsstrategien_1999}, the optimal strategy to minimize \eqref{eq:expsf} consists in hedging the modified claim
\begin{align}\label{eq:modifH}
    \varphi_p H &= H - \min\left(a_p\, \rho_*^{\frac{1}{p-1}}, H\right), 
\end{align}
where the constant $a_p$ is implicitly determined by the capital requirement $\mathbb{E}^*[\varphi_p\, H] = z$.

\begin{proposition}[Risk aversion with several risky assets]\label{prop:lowmom_ndim}
Consider an investor endowed with $z$ monetary units at time $t = 0$. We assume that her objective is to minimize the lower partial moment 
\[
\E\left[{(H-V_T)_+}^p\right], 
\]
for $p>1$, cf.~\eqref{eq:expsf}.  
Then, the optimal strategy is equivalent to replicating the contingent claim on the optimal growth portfolio $\Pi_t$ with value process
\begin{align*} 
\begin{split}
    V_t = V(t, \Pi_t) = H_{t, T}\, \bigg\{&\Phi(d_-(t; \Pi_t, L)) - \bigg(\frac{L}{\Pi_t}\bigg)^{p'} \exp\bigg\{  p' (p'+1) \bigg(\frac{1}{2}{\sigma_*}^2-r\bigg) (T-t) \bigg\} \times
    \\
    & \Phi\bigg( d_-(t; \Pi_t, L)-p' \,\sigma_*\, \sqrt{T-t} \bigg)\bigg\}. 
\end{split}
\end{align*}
Here, $p'=1/(p-1)$, and the threshold $L$ is implicitly determined by the capital requirement $V(0, \Pi_0) = V_0 = z$. 
\end{proposition}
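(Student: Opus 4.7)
The plan is to specialize the abstract optimal-claim formula of \cite[Lemma~11]{leukert_absicherungsstrategien_1999}, already invoked in \eqref{eq:modifH}, to the present multi-asset Black–Scholes setting and then compute the risk-neutral price of the resulting $\mathcal{F}_T$-measurable pay-off. Rewriting the minimum in \eqref{eq:modifH} via a case split yields the clean form
\[
\varphi_p H \;=\; \bigl(H - a_p\, \rho_*^{p'}\bigr)_+, \qquad p' := 1/(p-1),
\]
so, by completeness of the market (full-rank $\pmb{\sigma}$), the optimal wealth process is the risk-neutral price $V_t = R_{t,T}\,\E^*\bigl[(H - a_p\,\rho_*^{p'})_+ \bigm| \mathcal{F}_t\bigr]$, and the problem reduces to evaluating this conditional expectation and matching it against the formula in the statement.

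The first step I would carry out is to convert $\rho_*$ into a function of the optimal growth portfolio. Substituting $\pmb{\vartheta}^\top \pmb{W}_T^* = \log(\Pi_T/\Pi_0) - (r - \tfrac{1}{2}{\sigma_*}^2)T$ into \eqref{eq:radon} produces the pleasant identity $\rho_* = e^{rT}\,\Pi_0/\Pi_T$. Consequently the event $\{a_p\rho_*^{p'}<H\}$ coincides with $\{\Pi_T > L\}$ for a threshold $L$ characterized by $a_p\, e^{rTp'}\,\Pi_0^{p'} = H\, L^{p'}$; this bijection between $a_p$ and $L$ allows us to parametrize the problem directly by $L$, as the proposition does.

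The second step is the actual valuation, which is a Black–Scholes–style computation. I would split
\[
V_t \;=\; R_{t,T}\,H\,\P^*\!\bigl[\Pi_T \ge L \bigm| \mathcal{F}_t\bigr] \;-\; R_{t,T}\, a_p\, \E^*\!\bigl[\rho_*^{p'} \i_{\{\Pi_T \ge L\}} \bigm| \mathcal{F}_t\bigr].
\]
Under $\P^*$ the process $(\Pi_t)$ is a geometric Brownian motion with drift $r$ and volatility $\sigma_*$, so the first summand is immediately $H_{t,T}\,\Phi\bigl(d_-(t;\Pi_t,L)\bigr)$. Inserting $\rho_*^{p'} = (e^{rT}\Pi_0/\Pi_T)^{p'}$ into the second summand reduces it to a Gaussian integral of the form $\E\bigl[e^{\alpha Z} \i_{\{Z\ge k\}}\bigr]$ with $\alpha = -p'\sigma_*\sqrt{T-t}$; completing the square yields the shifted argument $d_-(t;\Pi_t,L) - p'\sigma_*\sqrt{T-t}$ of the second $\Phi$, the deterministic exponential prefactor, and — after invoking the relation $a_p\,e^{rTp'}\Pi_0^{p'} = H\,L^{p'}$ from the previous step — the factor $(L/\Pi_t)^{p'}$. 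This reproduces the formula stated for $V(t,\Pi_t)$.

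The main obstacle I anticipate is the implicit nature of the threshold: one must show existence and uniqueness of $L\in(0,\infty)$ (equivalently $a_p>0$) such that $V(0,\Pi_0)=z$. This should follow from dominated convergence once it is verified that $L \mapsto V(0,\Pi_0;L)$ is continuous and strictly monotone with limits $0$ and $H_{0,T}$ as $L\to\infty$ and $L\to 0+$ respectively; the assumption $z<H_{0,T}$ then pins down a unique $L$. Once the closed form for $V(t,\Pi_t)$ is in hand, identification of the replicating portfolio as its delta in the optimal growth portfolio is a routine consequence of Itô's formula together with completeness of the market.
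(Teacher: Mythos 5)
Your proposal follows essentially the same route as the paper: it identifies $\rho_*=e^{rT}\Pi_0/\Pi_T$ via the density process, rewrites the modified claim $\varphi_pH=(H-a_p\rho_*^{p'})_+$ as a threshold claim $\{\Pi_T\ge L\}$ on the optimal growth portfolio, and then evaluates the conditional $\P^*$-expectation by the same Gaussian completion-of-the-square computation that the paper carries out (delegated there to the analogous single-asset proof of Corollary~\ref{cor:lowmom_1dim}). You are in fact somewhat more careful than the paper in two respects --- you keep the factor $H$ in the relation between $a_p$ and $L$, and you address existence and uniqueness of the threshold $L$ satisfying $V(0,\Pi_0)=z$, which the paper leaves implicit.
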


\begin{corollary}[Risk aversion with a single risky asset]\label{cor:lowmom_1dim}
If there is only one risky asset $X=(X_t)_{t\in[0, T]}$ available to the investor, then the optimal strategy to minimize the lower partial moment \eqref{eq:expsf} with exponent $p>1$ will be equivalent to replicating the value process $V_t=V(t,X_t)$ equal to
\begin{align}\label{eq:digiCall_val_proc}
H_{t, T}\,\bigg\{\Phi(d_-(t; X_t, L)) - \frac{L^{\alpha_p}}{X_t^{\alpha_p}}
    \exp\bigg\{  \alpha_p (\alpha_p+1) \bigg(\frac{1}{2}\sigma^2-r\bigg) (T-t) \bigg\}\, \Phi\bigg( d_-(t; x, L)-\alpha_p\, \sigma\, \sqrt{T-t} \bigg)\bigg\},
\end{align}
where $\alpha_p := \alpha/(p-1)$ and $\alpha := (\mu-r)/\sigma^2$. The hedging strategy $\xi_p$ is given by 
\begin{align*}
    \xi_p(t, X_t) 
    = H_{t, T} \Bigg(&\frac{\phi(d_-(t;X_t,L))}{X_t \sigma \sqrt{T-t}}\\
    &\quad - \frac{L^{\alpha_p}}{X_t^{\alpha_p}}\exp\left\{  \alpha_p \left(\alpha_p+1\right) \bigg(\frac{1}{2}\sigma^2-r\bigg) (T-t) \right\} \frac{\phi(d_-(t;X_t,L)-\alpha_p\,\sigma\,\sqrt{T-t})}{X_t\, \sigma\, \sqrt{T-t}}
    \\ 
    &\quad+ \frac{\alpha_p L^{\alpha_p}}{X_t^{\alpha_p+1}}\exp\left\{ \alpha_p \left(\alpha_p+1\right) \left(\frac{1}{2}\sigma^2-r\right) (T-t) \right\} \Phi(d_-(t; X_t, L) - \alpha_p\,\sigma\,\sqrt{T-t}) \Bigg).
\end{align*}
\end{corollary}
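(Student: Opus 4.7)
The plan is to specialize Proposition~\ref{prop:lowmom_ndim} to $n=1$ by re-parameterizing in terms of the underlying $X_t$ instead of the market portfolio $\Pi_t$, and then to obtain $\xi_p$ by direct differentiation of the resulting value function. The key observation is that, for $n=1$, $\Pi_t$ is a deterministic rescaling of $X_t^{\alpha}$ with $\alpha = (\mu-r)/\sigma^2$, so any threshold on $\Pi_T$ pulls back to a threshold on $X_T$ and the claim priced in Proposition~\ref{prop:lowmom_ndim} can be rewritten as a claim on $X_T$ alone.

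In practice, I would redo the argument directly in the single-asset setting. Starting from~\eqref{eq:modifH} and~\eqref{eq:radon}, invert~\eqref{eq:xit} to express $W_T$ as a function of $X_T$; substituting into $\rho_*$ yields $\rho_*^{1/(p-1)} = c_0\,X_T^{-\alpha_p}$ for an explicit constant $c_0$, using $\vartheta/\sigma = \alpha$ and $\alpha_p = \alpha/(p-1)$. Because $\alpha_p > 0$ under the standing assumption $\vartheta > 0$, the event $\{a_p\rho_*^{1/(p-1)} < H\}$ coincides with $\{X_T > L\}$ for some $L > 0$, and the modified claim reduces to
\[
\varphi_p H \;=\; H\,\mathbbm{1}_{\{X_T > L\}}\,\bigl[1-(L/X_T)^{\alpha_p}\bigr],
\]
with $L$ fixed implicitly by $\E^*[\varphi_p H] = z$. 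Risk-neutral pricing $V_t = R_{t,T}\,\E^*[\varphi_p H\mid\mathcal{F}_t]$ then splits into two pieces: the indicator part contributes the standard digital-call value $H_{t,T}\,\Phi(d_-(t;X_t,L))$, and the power part $H L^{\alpha_p}X_T^{-\alpha_p}\mathbbm{1}_{\{X_T>L\}}$ is handled by a completing-the-square argument for the conditional lognormal law $\log X_T\mid\mathcal{F}_t \sim \mathcal{N}(\log X_t + (r-\tfrac{1}{2}\sigma^2)(T-t),\sigma^2(T-t))$ under $\P^*$. This produces $X_t^{-\alpha_p}$ times an exponential prefactor in $T-t$ times $\Phi(d_-(t;X_t,L) - \alpha_p\,\sigma\,\sqrt{T-t})$, which assembles into the second term of~\eqref{eq:digiCall_val_proc}.

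The hedging strategy is then obtained by computing $\partial_x V(t,x)$ and evaluating at $x=X_t$. Three summands arise via the product rule: two from differentiating the two $\Phi$ factors (using $\partial_x d_-(t;x,L) = 1/(x\sigma\sqrt{T-t})$) and one from the $x$-dependence of the prefactor $(L/x)^{\alpha_p}$, the last being the unique summand in which $\Phi$ rather than $\phi$ survives. The main obstacle is purely algebraic: reconciling the exponential prefactor $\exp\bigl\{\tfrac{1}{2}\alpha_p^2\sigma^2(T-t) - \alpha_p(r-\tfrac{1}{2}\sigma^2)(T-t)\bigr\}$ coming out of the Gaussian integral with the discount factor $R_{t,T}$ and the factor $H_{t,T}$ pulled outside, so as to obtain exactly $\exp\{\alpha_p(\alpha_p+1)(\tfrac{1}{2}\sigma^2-r)(T-t)\}$, requires careful bookkeeping of the cancellations among $\alpha_p$, $\alpha$, $\vartheta$ and $\sigma$, but no new idea beyond Proposition~\ref{prop:lowmom_ndim}.
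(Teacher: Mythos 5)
Your proposal follows essentially the same route as the paper's own proof: work directly in the single-asset setting (as the paper does -- its proof of Proposition~\ref{prop:lowmom_ndim} in fact defers to this very computation), express $\rho_*=k\,X_T^{-\alpha}$, recast the modified claim \eqref{eq:modifH} as $H\bigl(1-(L/X_T)^{\alpha_p}\bigr)\mathbbm{1}_{\{X_T\ge L\}}$ with $L$ pinned down by $\E^*[\varphi_p H]=z$, evaluate the conditional $\P^*$-expectation of the power-times-indicator payoff by completing the square in the lognormal density, and obtain $\xi_p$ by differentiating in $x$ (a routine step the paper's proof does not even spell out). Structurally there is nothing new or missing in your plan.

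The one point that deserves a warning is the final \quote{bookkeeping}, which does not close in the way you assert. Your Gaussian prefactor $\exp\bigl\{\tfrac12\alpha_p^2\sigma^2(T-t)-\alpha_p\bigl(r-\tfrac12\sigma^2\bigr)(T-t)\bigr\}$ is the correct output of the integral and equals $\exp\bigl\{\alpha_p\bigl((\alpha_p+1)\tfrac{\sigma^2}{2}-r\bigr)(T-t)\bigr\}$; since $H_{t,T}=R_{t,T}\,H$ already carries the entire discount, there is no residual factor left to reconcile it with, and it differs from the exponent $\alpha_p(\alpha_p+1)\bigl(\tfrac{\sigma^2}{2}-r\bigr)(T-t)$ appearing in \eqref{eq:digiCall_val_proc} by exactly $e^{-\alpha_p^2\,r\,(T-t)}$, which is $1$ only if $r=0$ or $\alpha_p=0$. (Sanity check at $\alpha_p=1$: $\E^*[X_T^{-1}\mid\mathcal{F}_t]=X_t^{-1}e^{(\sigma^2-r)(T-t)}$, whereas the displayed exponent would give $\sigma^2-2r$.) This is a slip in the displayed formula rather than in your method -- the paper's own last equality in \eqref{eq:digiCall-rem} commits the same simplification error when passing from $\tfrac12\alpha_p^2\sigma^2-\alpha_p(r-\tfrac12\sigma^2)$ to $\alpha_p(\alpha_p+1)(\tfrac12\sigma^2-r)$ -- but you should not claim exact agreement with the statement as written; carry your (correct) exponent $\alpha_p(\alpha_p+1)\tfrac{\sigma^2}{2}(T-t)-\alpha_p r(T-t)$ through, with the corresponding adjustment in the three-term expression for $\xi_p$.
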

\begin{remark}
The first term in the expression for the modified claim $\varphi_p H$ in \eqref{eq:digiCall_val_proc} constitutes a digital European call option with strike $L$ and terminal payoff $H\mathbbm{1}_{\{ X_T \geq L \}}$.
\end{remark}
\begin{remark}
From a practical viewpoint, plausible values for $\alpha$ would be around $1$, assuming $\mu = 5\%$, $r=1\%$, and $\sigma = 20\%$. 
The exponent $\alpha_p$ would then be positive and decrease from $1$ to $0$ as $p \rightarrow \infty$ $(p>1)$.
\end{remark}
\begin{remark}
If the term corresponding to a digital European call option in Equation~\eqref{eq:digiCall_val_proc} matures in-the-money (i.e., $X_T > L)$, then the second term in this equation equals $(L/X_T)^{\alpha_p}$, which is less than $1$ and decreasing in $X_T$ if $\alpha_p > 0$. Conversely, if the digital call expires at-the-money, the second term in Equation~\eqref{eq:digiCall_val_proc} will be $1$, so that the entire claim matures worthless. 
The same holds true if the digital call expires out-of-the-money. 
\end{remark}
\begin{remark}
What happens in the case of extreme risk aversion, i.e., as $p\rightarrow\infty$? 
By Equations~\eqref{eq:modifH} and~\eqref{eq:fraclx}, 
\[
\lim_{p\to\infty}a_p=H-R_{T,0}z,\qquad\left(\frac{L}{X_T}\right)^{\alpha_p}=a_p\frac{k^{\frac{1}{p-1}}}{{X_T}^{\frac{\alpha}{p-1}}}.
\]
Hence,
\[
\lim_{p\to\infty}\varphi_p\,H=\lim_{p\to\infty}(1-a_p)\i_{\{X_T\geq 0\}}=R_{T,0}z\quad \Rightarrow \quad z=R_{0, T}\cdot \varphi_\infty H, 
\]
i.e., the entire endowment is kept in the bank account. 
This observation is consistent with the concept of total risk aversion, and it is in line with~\cite[][Lemma 14]{leukert_absicherungsstrategien_1999}. 
There, it is demonstrated that $\varphi_p H \rightarrow (H-a_\infty)_+$ for $p\rightarrow \infty$ almost surely and in $L^1(\P^*)$, for general (not necessarily constant) payoff functions $H = H(X_T)$.
\end{remark}

\begin{remark}
The knock-out feature of the digital European call that is present for risk-neutral/risk-taking investors ($p\in[0,1]$) makes hedging increasingly difficult if the underlying is close to the strike as maturity approaches, because the digital call's delta becomes unbounded. 
Appealingly, however, this knockout feature disappears for risk-averse investors ($p>1$), as one can see in Figure~\ref{fig:p1-5}, and the delta of these modified claims becomes more and more well-behaved as risk aversion increases ($p\rightarrow \infty$).
\end{remark}

\begin{figure}\label{fig:p1-5}
    \centering
    \includegraphics[scale=.56]{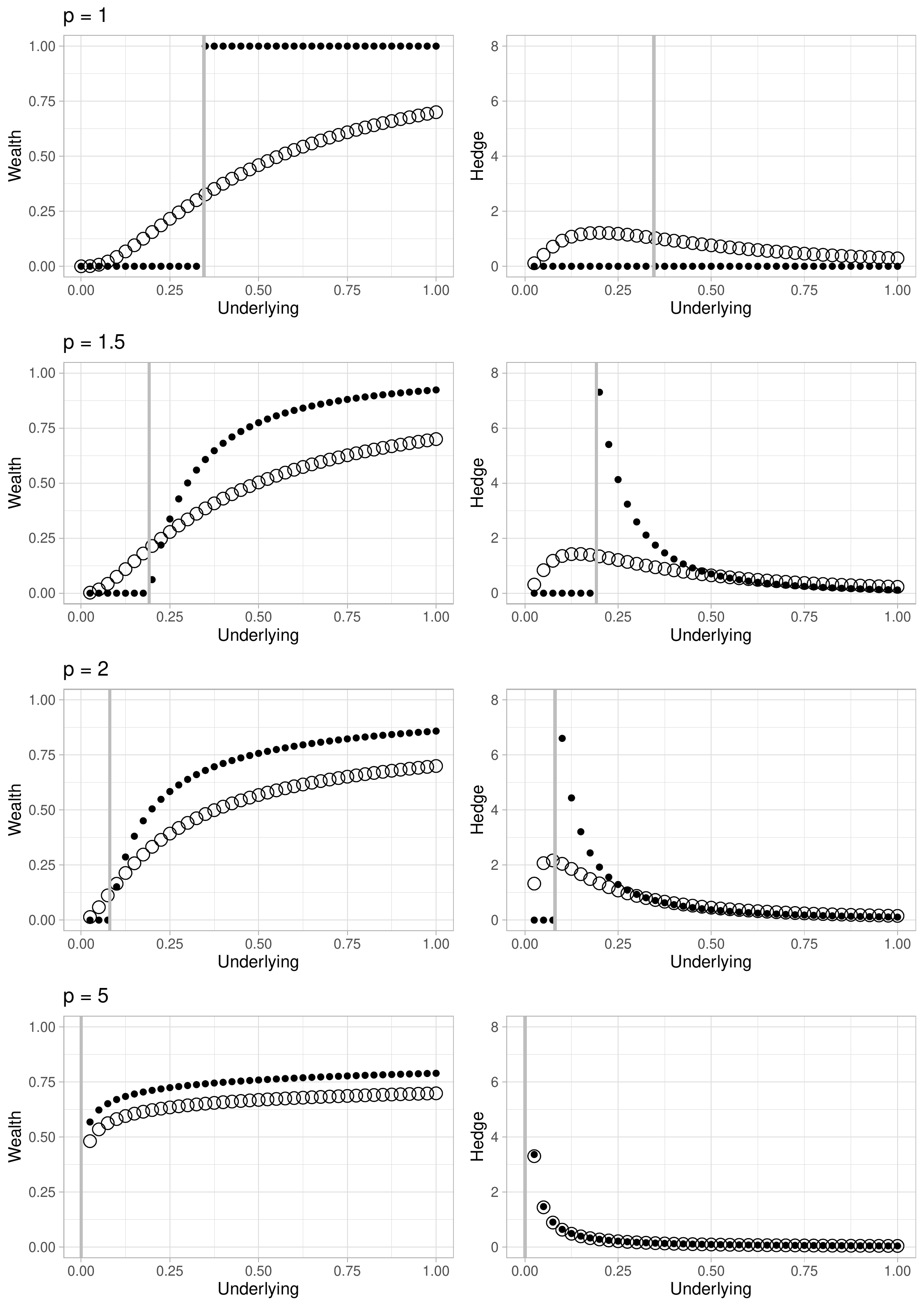}
    \caption{Wealth of an investor who seeks to  minimize expected shortfall (top) or lower partial moments of order $p$ relative to the investment goal $H$, respectively. The vertical lines denote the strike $L=L(p)$. Left column: wealth; right column: dollar hedge. Circles denote the initial state, while dots show terminal values. The maturity is $T=10$, the (annualized) drift $\mu=8\%$, the volatility $\sigma=30\%$, and the risk-free rate $r=1\%$. The investment goal is $H=1$, and the initial capital endowment is $z=0.7$. The required annualized return thus would be $(H/z)^{0.1}-1 \approx 3.6\% \gg r$.}
    \label{fig:my_label}
\end{figure}

\section{Downward Protection} \label{sec:down_prot}
The probability of reaching the target for the probability-maximizing policy, given by~\cite[][Theorem 3.1]{browne_reaching_1999} 
\begin{align}\label{eq:probsux}
    \sup_{f} \P_{(t, x)} [{X_T}^{(f)} \ge H ] = \Phi\left(\Phi^{-1}\left(\frac{x}{H_{t, T}}\right)+\sqrt{\pmb{\vartheta}^\top \pmb\vartheta \, (T-t)}\right), 
\end{align}
is the counter-probability of going bankrupt, 
which can be prohibitively high for practical purposes. 

\begin{remark}
If we assume an initial investment of two-thirds of the desired goal and a single risky asset with a drift of $6\%$, a volatility of $20\%$, and a zero risk-free rate, then the 'optimal' strategy entails a probability of losing everything of approximately $25\%$.  
\end{remark}

Clearly, this all-or-nothing strategy is too risky for most practical applications. 
Browne~\cite[][Section 8.2]{browne_reaching_1999} therefore proposed to control downside risk in the context of active portfolio management (cf. also~\cite{browne_beating_1999}). We adapt his approach to goal-based investing as follows. 

\begin{proposition}\label{prop:down_prot}
Consider an investor whose objective is to minimize the expected shortfall of her terminal wealth $V_T$ versus the goal $H \in \R_+$, with the additional requirement that the expected shortfall versus the discounted goal $H_{0, T}$ never exceed a predefined percentage $\delta\in[0, 1]$ of the latter. 
Then 
\begin{align*}
    &\sup_f \P\left[ X_T^{(f)} \ge H,\ \inf_{0\le s \le T} X_s^{(f)} \ge (1-\delta) H_{0,T} \,\bigg\rvert\, X_t = x \right] 
    \\ &= \Phi\left(\Phi^{-1}\left(\frac{x-(1-\delta) H_{t, T}}{\delta\, H_{t, T}}\right) + \sqrt{\pmb{\vartheta}^\top \pmb{\vartheta}\,(T-t)}\right). 
\end{align*}
\end{proposition}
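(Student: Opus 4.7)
The plan is to reduce Proposition~\ref{prop:down_prot} to the unconstrained probability-maximizing formula~\eqref{eq:probsux} via the shift
\[
Z_t := X_t^{(\pmb f)} - (1-\delta)\,H_{t,T},
\]
which represents the investor's wealth after earmarking the discounted value of a risk-free payout of $(1-\delta)H$ at maturity. A short It\^o calculation shows that whenever $X$ is a self-financing wealth process with strategy $\pmb f$, the process $Z$ is itself a self-financing wealth process with rescaled strategy $\tilde{\pmb f}_t := \pmb f_t\,X_t/Z_t$, since the deterministic growth of the earmarked bond is absorbed exactly; conversely any such $\tilde{\pmb f}$ lifts to an admissible $\pmb f$ via the inverse rescaling. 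The terminal goal $X_T\ge H$ then translates to $Z_T\ge \delta H$, and since $H_{s,T}$ is non-decreasing in $s$, any sample path with $Z_s\ge 0$ automatically satisfies $X_s\ge (1-\delta)H_{s,T}\ge (1-\delta)H_{0,T}$, so the pathwise floor is honored whenever $Z$ stays nonnegative.

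For the upper bound, any admissible $\pmb f$ respecting the floor yields a $Z$-wealth process with $Z_T\ge \delta H$ on the success set, hence the probability on the left-hand side of the proposition is dominated by $\P\bigl[Z_T^{(\tilde{\pmb f})}\ge \delta H \,\big|\, Z_t = x-(1-\delta)H_{t,T}\bigr]$. Taking the supremum over all admissible strategies for $Z$ (a strictly larger class, as the nonnegativity of $Z$ is dropped) and invoking~\eqref{eq:probsux} with reduced endowment $x-(1-\delta)H_{t,T}$ and reduced goal $\delta H$ yields exactly the formula in the proposition. The bound is attained by the strategy that combines a static reserve of $(1-\delta)H_{t,T}$ in the bank with the replication of a digital European call on the optimal growth portfolio with payoff $\delta H$ (cf.\ Proposition~\ref{prop:effhed_ndim} at $p=0$); the value process of this digital call lies in $[0,\delta H_{t,T}]$, so $Z^*$ never dips below $0$, the floor on $X$ is respected, and the bound is realized.

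The main obstacle is the admissibility translation between $\pmb f$ and $\tilde{\pmb f}$ on the set $\{Z_t=0\}$, where the rescaling is singular, together with the verification that widening the supremum from ``strategies keeping $Z$ nonnegative'' to ``all admissible strategies for $Z$'' is legitimate. Both points are handled by noting that the replicating value of the digital call stays strictly positive on $[0,T)$, so the singular set is only touched at maturity and does not affect the hedging strategy that realizes the bound.
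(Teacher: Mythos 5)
Your route---earmark the bond value of $(1-\delta)H$, pass to the shifted wealth $Z_s=X_s^{(f)}-(1-\delta)H_{s,T}$, and reduce to Browne's unconstrained formula \eqref{eq:probsux} with capital $x-(1-\delta)H_{t,T}$ and goal $\delta H$---is exactly the argument the paper delegates to Browne's Section~8, and your attainment step is sound: the bond reserve plus the digital call on the growth portfolio (Proposition~\ref{prop:effhed_ndim} at $p=0$) has $Z$-value in $[0,\delta H_{t,T}]$, so it respects the stated floor and realizes the right-hand side; the rescaling singularity at $\{Z_t=0\}$ is indeed harmless there.

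The genuine gap is in your upper bound. The constraint as stated is the \emph{constant} floor $(1-\delta)H_{0,T}$, which lies below the moving level $(1-\delta)H_{s,T}$; a strategy respecting it only guarantees $Z_s\ge(1-\delta)\bigl(H_{0,T}-H_{s,T}\bigr)$, which is negative for $s>0$, so floor-respecting strategies do \emph{not} correspond to nonnegative $Z$-processes. Your fix---enlarge to ``all admissible strategies for $Z$ with nonnegativity dropped'' and then invoke \eqref{eq:probsux}---is not legitimate: \eqref{eq:probsux} is the supremum over wealth processes constrained to stay nonnegative, and once wealth may go negative the supremum strictly exceeds it. Concretely, with $z<\delta H_{t,T}$, take $\epsilon>0$, run the nonnegative-wealth optimal strategy for capital $z+\epsilon$ and goal $\delta H+\epsilon e^{r(T-t)}$, and subtract a bank position worth $\epsilon e^{r(s-t)}$: the resulting process starts at $z$, is bounded below by $-\epsilon e^{r(T-t)}$, and reaches $\delta H$ with probability $\Phi\bigl(\Phi^{-1}\bigl(\tfrac{z+\epsilon}{\delta H_{t,T}+\epsilon}\bigr)+\sqrt{\pmb{\vartheta}^\top\pmb{\vartheta}\,(T-t)}\bigr)$, strictly larger than \eqref{eq:probsux}. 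So the chain ``constrained sup $\le$ sup over the enlarged class $=$ \eqref{eq:probsux}'' fails at the last equality. The argument closes only if the floor is read as the moving one, $X_s\ge(1-\delta)H_{s,T}$ (the time-$s$ discounted goal), which is the formulation in Browne's Section~8 that the paper's one-line proof cites: then floor-respecting strategies correspond exactly to nonnegative $Z$-processes and \eqref{eq:probsux} applies directly, with no enlargement needed. As written, your proof establishes the moving-floor version but leaves the stated constant-floor version unproved (for the latter, the constraint set is strictly larger, so a priori the left-hand side is only bounded below by the right-hand side).
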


\begin{corollary}[cf.~\cite{cvitanic_dynamic_1999}, Example 4.1]
Let $\varepsilon$ be a given positive real number. 
It follows from Proposition \ref{prop:down_prot} that the smallest initial endowment $x_\varepsilon > 0$ required so that the probability of violating the shortfall constraint is bounded from above by $\varepsilon$ is given by
$$
x_\varepsilon = \left[ \Phi\left( \Phi^{(-1)}(1-\varepsilon)-\sqrt{\pmb{\vartheta}^\top \pmb{\vartheta} \,T} \right) + 1-\delta \right] H_{0, T} .
$$
\end{corollary}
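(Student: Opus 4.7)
The plan is to read the Corollary as a direct algebraic consequence of Proposition~\ref{prop:down_prot} specialised to $t=0$. The key idea is that ``violating the shortfall constraint'' means that the complementary event in the bracket of Proposition~\ref{prop:down_prot} occurs, so its probability is
\[
1 - \sup_f \P\left[ X_T^{(f)} \ge H,\ \inf_{0\le s \le T} X_s^{(f)} \ge (1-\delta) H_{0,T} \,\big|\, X_0 = x \right].
\]
Requiring this complementary probability to be at most $\varepsilon$ is equivalent to requiring the supremum on the right to be at least $1-\varepsilon$.

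The first step is to observe that the right-hand side of Proposition~\ref{prop:down_prot} at $t=0$ is strictly monotonically increasing in $x$ (both $\Phi^{-1}$ and the outer $\Phi$ are monotone), so the smallest initial endowment that makes the constraint feasible is exactly the one for which equality holds. I would therefore set
\[
\Phi\!\left(\Phi^{-1}\!\left(\frac{x_\varepsilon - (1-\delta)\,H_{0,T}}{\delta\, H_{0,T}}\right) + \sqrt{\pmb{\vartheta}^\top \pmb{\vartheta}\,T}\right) = 1-\varepsilon
\]
and then invert.

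Next I would apply $\Phi^{-1}$ to both sides, giving
\[
\Phi^{-1}\!\left(\frac{x_\varepsilon - (1-\delta)\,H_{0,T}}{\delta\, H_{0,T}}\right) = \Phi^{-1}(1-\varepsilon) - \sqrt{\pmb{\vartheta}^\top \pmb{\vartheta}\,T},
\]
then apply $\Phi$ and solve for $x_\varepsilon$ to obtain
\[
x_\varepsilon = \left[\,\delta\,\Phi\!\left(\Phi^{-1}(1-\varepsilon) - \sqrt{\pmb{\vartheta}^\top \pmb{\vartheta}\,T}\right) + (1-\delta)\right] H_{0,T},
\]
which is the formula stated in the Corollary (up to the understanding that the leading $\Phi(\cdots)$ is rescaled by $\delta$; the sanity check $\delta=1$ recovers the classical quantile-hedging price of F\"ollmer--Leukert, and $\varepsilon \downarrow 0$ forces $x_\varepsilon \uparrow H_{0,T}$, i.e.\ full risk-free replication).

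There is no real obstacle beyond the algebra and the monotonicity argument used to identify ``smallest'' with ``equality''. The only subtle point worth verifying carefully is that the argument of $\Phi^{-1}$ on the left-hand side indeed lies in $(0,1)$, i.e.\ that $(1-\delta)H_{0,T} < x_\varepsilon \le H_{0,T}$, which follows from $\varepsilon \in (0,1)$ and the strict positivity of $\sqrt{\pmb{\vartheta}^\top \pmb{\vartheta}\,T}$ together with the assumption that all entries of $\pmb{\vartheta}$ are strictly positive (guaranteeing that a genuinely risky strategy is available and the formula is not degenerate).
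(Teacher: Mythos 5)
Your derivation is correct and is the natural (indeed, the only sensible) route: the paper gives no explicit proof of this corollary, which is meant to follow by setting $t=0$ in Proposition~\ref{prop:down_prot}, equating the success probability to $1-\varepsilon$, and inverting, exactly as you do; the monotonicity-in-$x$ observation that justifies replacing ``smallest'' by ``equality'' is the right supporting remark. Note that your inversion yields
$x_\varepsilon = \bigl[\,\delta\,\Phi\bigl(\Phi^{-1}(1-\varepsilon)-\sqrt{\pmb{\vartheta}^\top\pmb{\vartheta}\,T}\bigr)+(1-\delta)\bigr]H_{0,T}$,
i.e.\ with the leading $\Phi(\cdot)$ scaled by $\delta$, whereas the corollary as printed omits this factor; your version is the one consistent with Proposition~\ref{prop:down_prot} (and with the sanity check $\varepsilon\downarrow 0 \Rightarrow x_\varepsilon\uparrow H_{0,T}$, which fails without the $\delta$), so you have in effect correctly identified a typo in the stated formula rather than a gap in your own argument.
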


Note that, as $\varepsilon \rightarrow 1$, the initial endowment $x_\varepsilon$ tends to the discounted goal $H_{0, T}$ minus the shortfall allowance $\delta \,H_{0, T}$. 

\subsection{The nature of the claim with downward protection}
If maximizing the probability of reaching an investment goal is equivalent to replicating a digital European call option (cf.~\cite[][Proposition 4.1]{browne_reaching_1999}), what interpretation can be given to the situation in this section? 
\\ 
First, let us rephrase the optimal policy, given for the general case in~\cite[][Theorem 3.1]{browne_reaching_1999}, for constant coefficients and in the presence of a downward risk limit:
\begin{align*}
    f^*_t(x-(1-\delta)\,H_{t, T}; \delta \,H) = \frac{\delta \,H_{t, T}}{\sigma \sqrt{T-t}}\phi\left( \Phi^{-1}\left(\frac{x-(1-\delta) \,H_{t, T}}{\delta \,H_{t, T}}\right) \right). 
\end{align*}
Now, if we evaluate $f^*_t$ at $x = C(t, X_t;\delta)$, where 
\begin{align*}
    C(t, X_t;\delta) = \delta \,H_{t, T}\,\Phi\left( \frac{\log \frac{X_t}{K^*} + (r-\frac{\sigma^2}{2})(T-t)}{\sigma\,\sqrt{T-t}} \right) + (1-\delta) \,H_{t, T}
\end{align*} 
then 
\begin{align*}
    f_t^*(C(t, X_t;\delta)-(1-\delta) \,H_{t, T}; \delta \,H) &= \frac{\delta \,H_{t, T}}{\sigma\,\sqrt{T-t}}\, \phi\left( \Phi^{-1}\left(\frac{C(t, X_t;\delta)-(1-\delta) \,H_{r, T}}{\delta \,H_{t, T}}\right) \right)
    \\
    &= \frac{\delta \,H_{t, T}}{\sigma\,\sqrt{T-t}}\,\phi\left( \frac{\log \frac{X_t}{K^*}+\left(r-\frac{\sigma^2}{2}\right)(T-t))}{\sigma \,\sqrt{T-t}} \right) 
    \\ 
    &= \Delta_t \cdot X_t
\end{align*}
where $\Delta_t$ is the delta of the digital European call option paying $\delta \,H$ at maturity if $X_T \ge K^*$, and nothing otherwise. 
The optimal policy thus consists of initially investing $(1-\delta) \,H_{0, T}$ into a bond, and the remainder into a digital European call option with said characteristics. 
As before, the strike $K^*$ of this contingent claim depends implicitly on the initial endowment $z$. 

\section{Deep Hedging} \label{sec:deep_hedg}
The investment strategies derived in the previous sections cannot be transferred to more realistic settings without further ado. The optimality fundamentally relies on the completeness of the financial market model as well as the simplistic distributional assumption on the price dynamics. More sophisticated price dynamics, for instance involving rough volatility, inevitably lead to incomplete market models. Furthermore, minimizing lower partial moments in such intricate environments may hardly be analytically tractable. It remains unclear whether the duality principle between the optimization problem and the hedging of a qualitatively similar payoff prevails. In contrast, simply applying the proposed delta hedging strategies for different price dynamics can be arbitrarily bad. Another impediment for applications in the real world are discrete hedging schedules and transaction cost. Therefore, we investigate whether we manage to circumvent these delicate issues by applying the striking approach of Deep Hedging as proposed in \cite{buehler_deep_2019}. Subsequently, we present our findings for the one-dimensional case.

For any $t\in\{0,1,2,\hdots,N\}$ in some discrete time grid with horizon $N\in\N$, we consider a feedforward neural network
\[F_t=\left(\phi\circ A_t^{(2)}\right)\circ\left(\phi\circ A_t^{(1)}\right)\circ\left(\phi\circ A_t^{(0)}\right)\]
with some affine functions
\[A_t^{(0)}:\R^2\longrightarrow\R^{10},\ A_t^{(1)}:\R^{10}\longrightarrow\R^{10},\ A_t^{(2)}:\R^{10}\longrightarrow\R\]
and the sigmoid activation function $\phi(x)=(1+e^{-x})^{-1}$. The input layer consists of the current holding $\xi_{t\leli}$ before rehedging and the moneyness $S_t/S_0$, where $S_t$ is the marginal distribution of a geometric Brownian motion as considered above. The output layer reveals the outcome $\xi_t$ of the re-hedging at the time instance $t$. Similarly as above, we aim at optimizing a functional of the terminal wealth $V_T$ that can be derived iteratively. Let $b_{0\leli}$ denote the initial holdings in the bank account bearing the risk-free rate $r\in\R$, $\xi_{0\leli}$ denote the initial holdings in the underlying, $\kappa\geq 0$ the coefficient for proportional transaction cost, and $\tau>0$ the year fraction of a time step. Hence, the value of the portfolio before and after re-hedging at time $0$ are given by
\begin{align*}
V_{0\leli}&=b_{0\leli}+\xi_{0\leli}S_0,\\
V_0&=b_0+\xi_0S_0,
\intertext{where $b_0:=b_{0\leli}-\left(\xi_0-\xi_{0\leli}\right)S_0-\kappa\left|\xi_0-\xi_{0\leli}\right|S_0$ satisfied the self-financing principle. Then, we proceed consistently in terms of the iteration}
V_{t\leli}&=b_{t\leli}+\xi_{t\leli}S_t\\
V_t&=b_t+\xi_tS_t,
\intertext{where $b_{t\leli}=b_{t-1}e^{r\tau}$, $\xi_{t\leli}=\xi_{t-1}$ and $b_t=b_{t\leli}-\left(\xi_t-\xi_{t\leli}\right)S_t-\kappa\left|\xi_t-\xi_{t\leli}\right|S_t$ for $t\in\{1,2,\hdots,N-1\}$. At maturity, we have to bear additionally the unwinding cost. Hence,}
V_T&=b_{T-1}e^{r\tau}+\xi_{T-1}S_T-\kappa\left|\xi_{T-1}\right|S_T.
\end{align*}
For experimental purposes, we chose similar parameters as in Figure~\ref{fig:p1-5}; a maturity $T=10$, a discretization $N=52T$ (i.e., weekly rehedging with $\tau=1/52$), a risk-free rate $r=1\%$, a drift $\mu=8\%$ and a volatility $\sigma=30\%$. The initial state of the market and the wealth are standardized to $S_0=100$, $b_{0\leli}=70$ and 
$\xi_{0\leli}=0$. The ultimate goal is to reach the deterministic payoff $H=100$;
this refers to as a continuously compounded return of $h\approx 3.6\%$. Let $J\in\N$ be a sufficiently large number of simulated paths $S^{(j)}=(S_t^{(j)})_{t=0,1,2,\hdots,N}$, e.g., $J=10^4$. Given this parameter set, we seek to find optimal rehedging strategies. This can be achieved by applying a suitable backpropagation algorithm on the deep neural network architecture that consolidates the above feedforward neural network instances together with the intermediary accounting routines. A direct translation of the above concept is the minimization of the loss
\[\frac{1}{J}\sum_{j=1}^J\max\left\{H-V_T^{(j)},0\right\}^p.\]
We modify the loss function for two crucial reasons. Firstly, the function $x\mapsto\max\{H-x,0\}$ is non-differentiable at the point $H$ and ignores any points beyond $H$. This raises concerns on the stability of the learning algorithm. Therefore, we replace the maximum with the softplus function $\log{(1+e^{x})}$. Secondly, the natural extension of the loss function apparently has an undesirable local minimum for strategies with a deterministic equity portion $\xi_t\equiv\xi\in[0,1]$; see Figure~\ref{fig:convexity} above.
\begin{figure}
\centering
\includegraphics[width=0.5\textwidth]{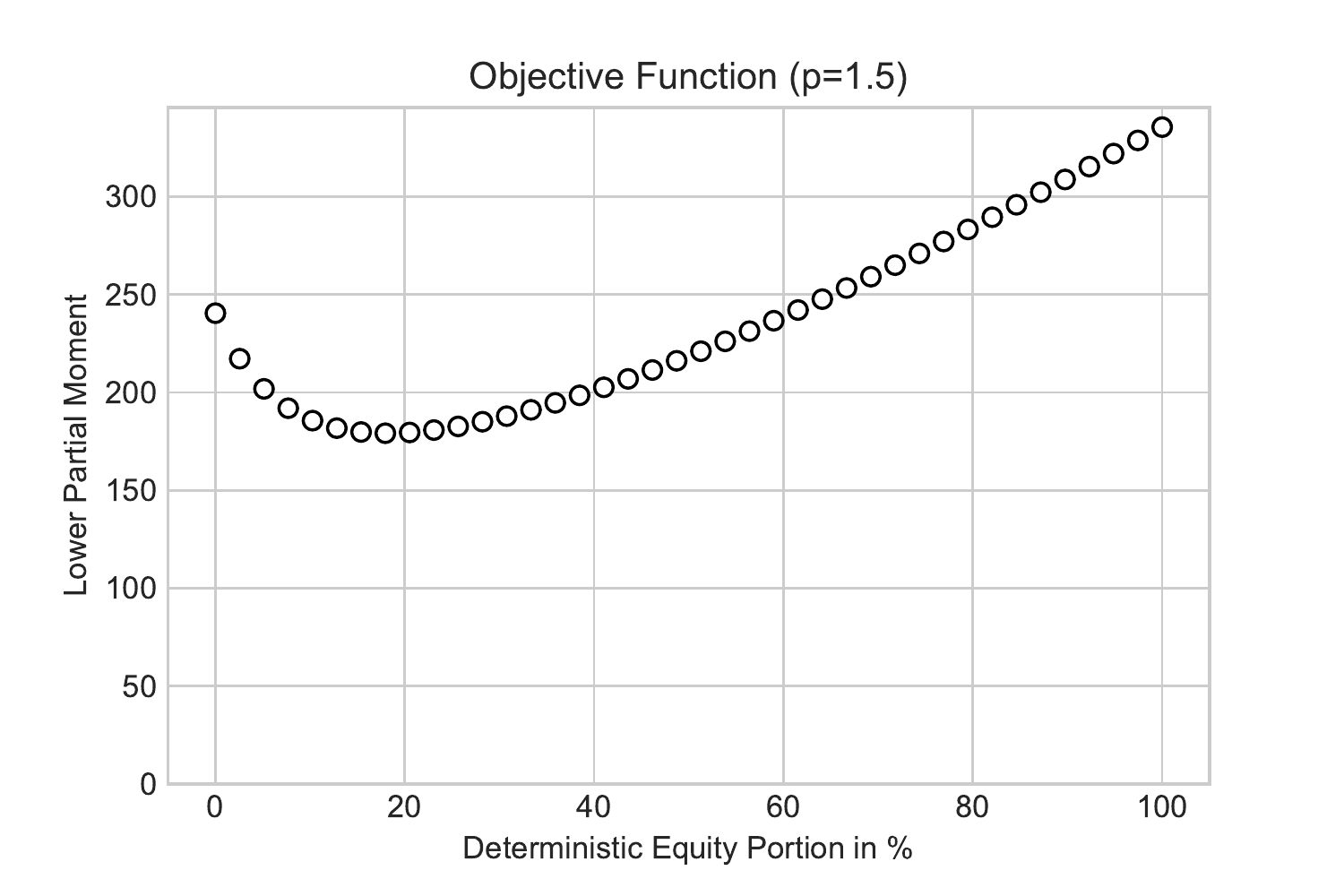}
\caption{This chart exhibits the local minimum of the lower partial moments in the neighborhood of static strategies.}
\label{fig:convexity}
\end{figure}
Without further interventions, the learning algorithms often gets stuck in the suboptimal neighborhood of static strategies. Therefore, we also penalize deviances beyond $H$ in terms of
\[\frac{1}{J}\sum_{j=1}^J\left(\log\left\{1+\exp\left\{H-V_T^{(j)}\right\}\right\}\right)^p+\lambda\log\left\{1+\exp\left\{V_T^{(j)}-H\right\}\right\}.\]
for a regularization parameter $\lambda=0.1$. It needs to be noted that the introduction of the positive second summand does not alter the global optimum. The following charts exhibit the out-of-sample performance of a trained artificial financial agent for $p\in\{1,1.5,5\}$ and $\kappa\in\{0,0.005\}$. All charts are generated with the same sample data. The training phase of the Jupyter notebook takes in each case approximately $2.5h$ on Google Colab. As a benchmark, we also show the performance of naively applying the continuous time optimal hedging strategy on the same weekly time grid. 

For $p\in\{1,1.5\}$, deep hedging mitigates the risk of large losses. In the absence of transaction cost, Deep Hedging is not able to surpass the benchmark consistently, at least not for the selected parameters and without further measures. However, the strength of Deep Hedging is particularly evident in the presence of transaction costs. Likewise, it could be extended accordingly to more realistic dynamics of the underlying for which analytical solutions are typically not available. 
The empirically derived expected terminal wealth and the Value-at-Risk to a significance of $5\%$ for the different investment strategies are lined up in Table~\ref{tbl:lineup}. 
Remarkably, due to accounting for offsetting effects of an adjusted hedge and borne transaction cost, Deep Hedging leads to an improved Value-at-Risk in the presence of transaction cost.

\begin{figure}
\textbf{Final Wealth Distribution without Transaction Cost}\\
\vspace{1em}
\begin{center}
Deep Hedging \hfill Discrete Delta Hedging
\end{center}
$p=1$\\
\includegraphics[width=0.5\textwidth]{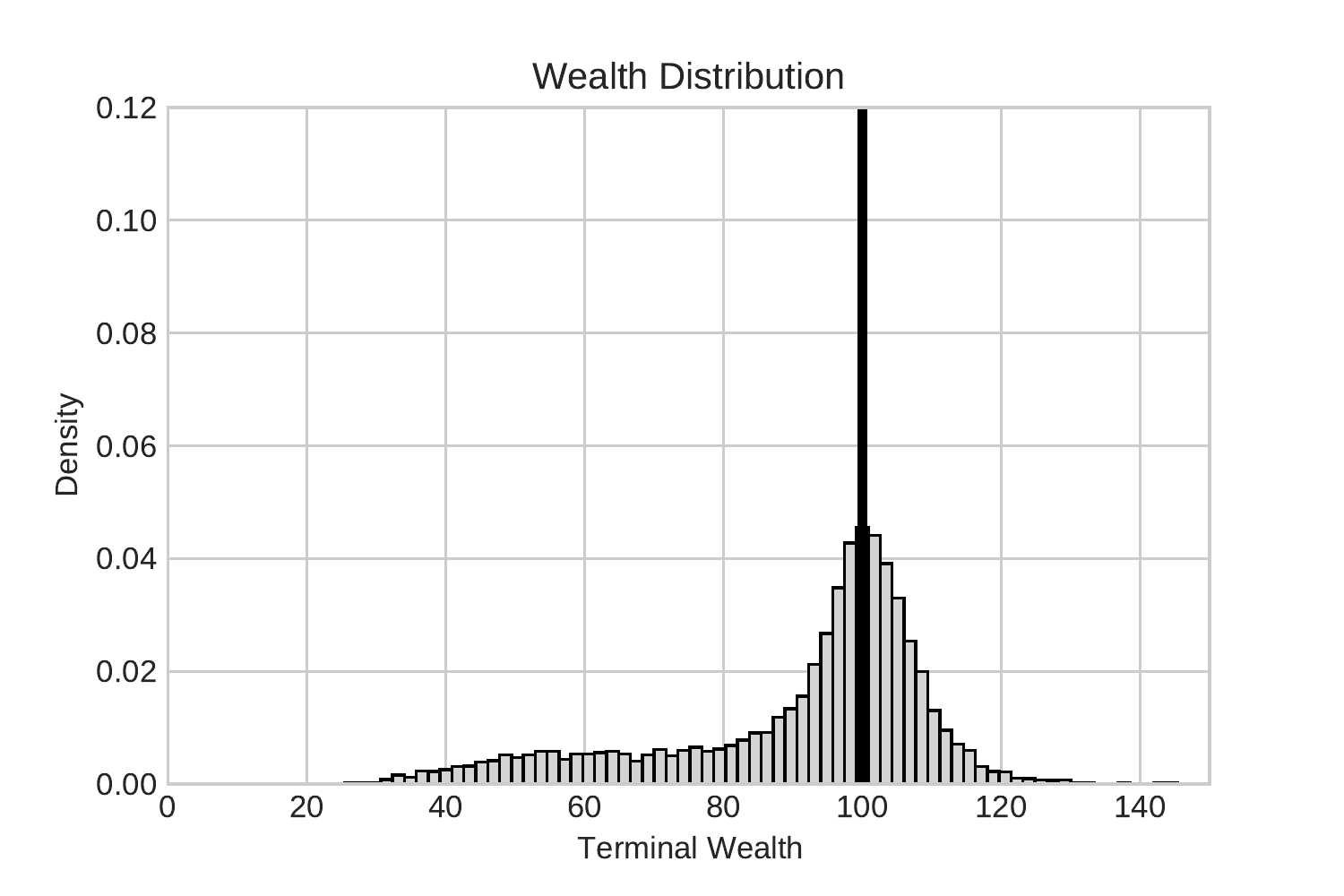}\includegraphics[width=0.5\textwidth]{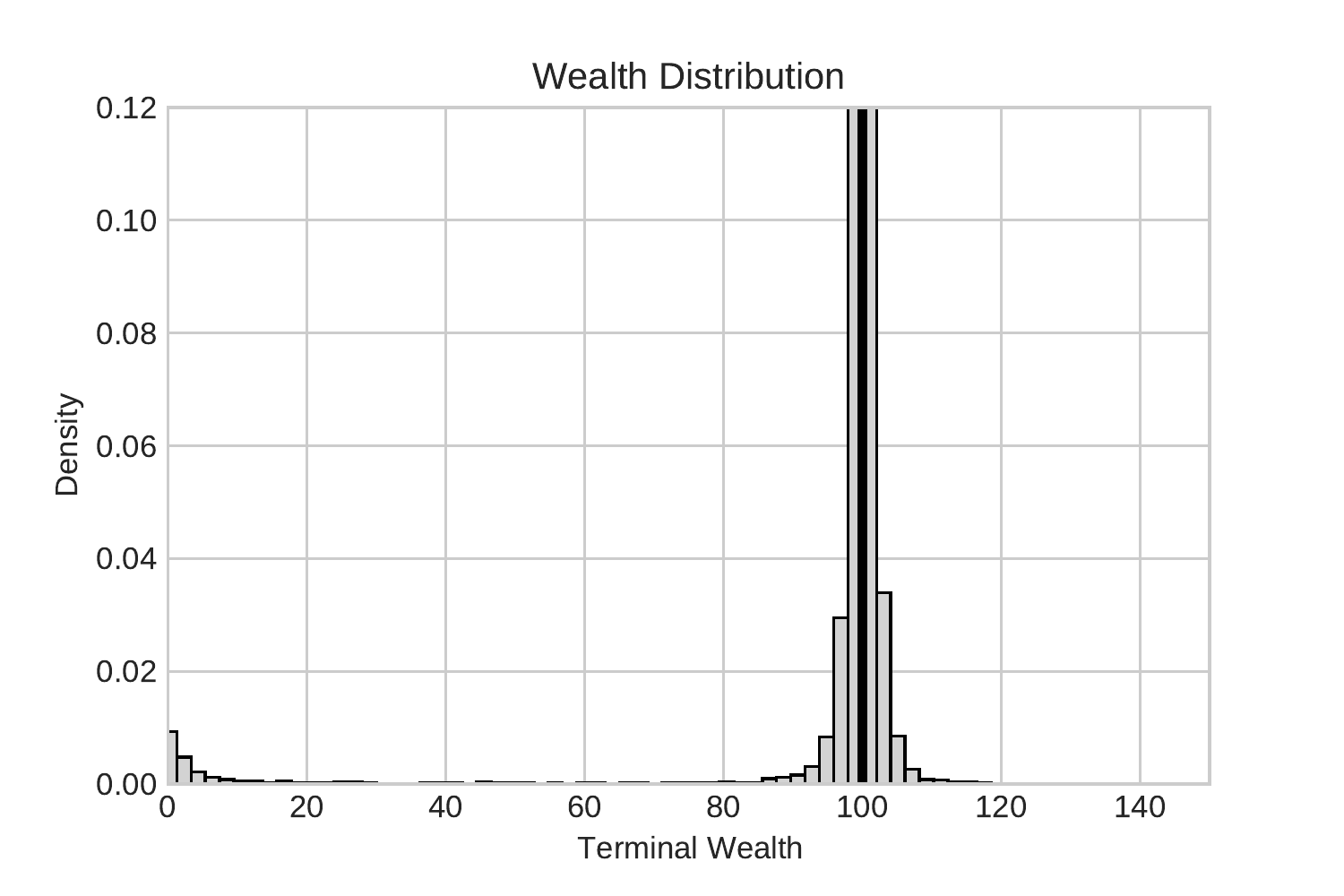}
$p=1.5$\\
\includegraphics[width=0.5\textwidth]{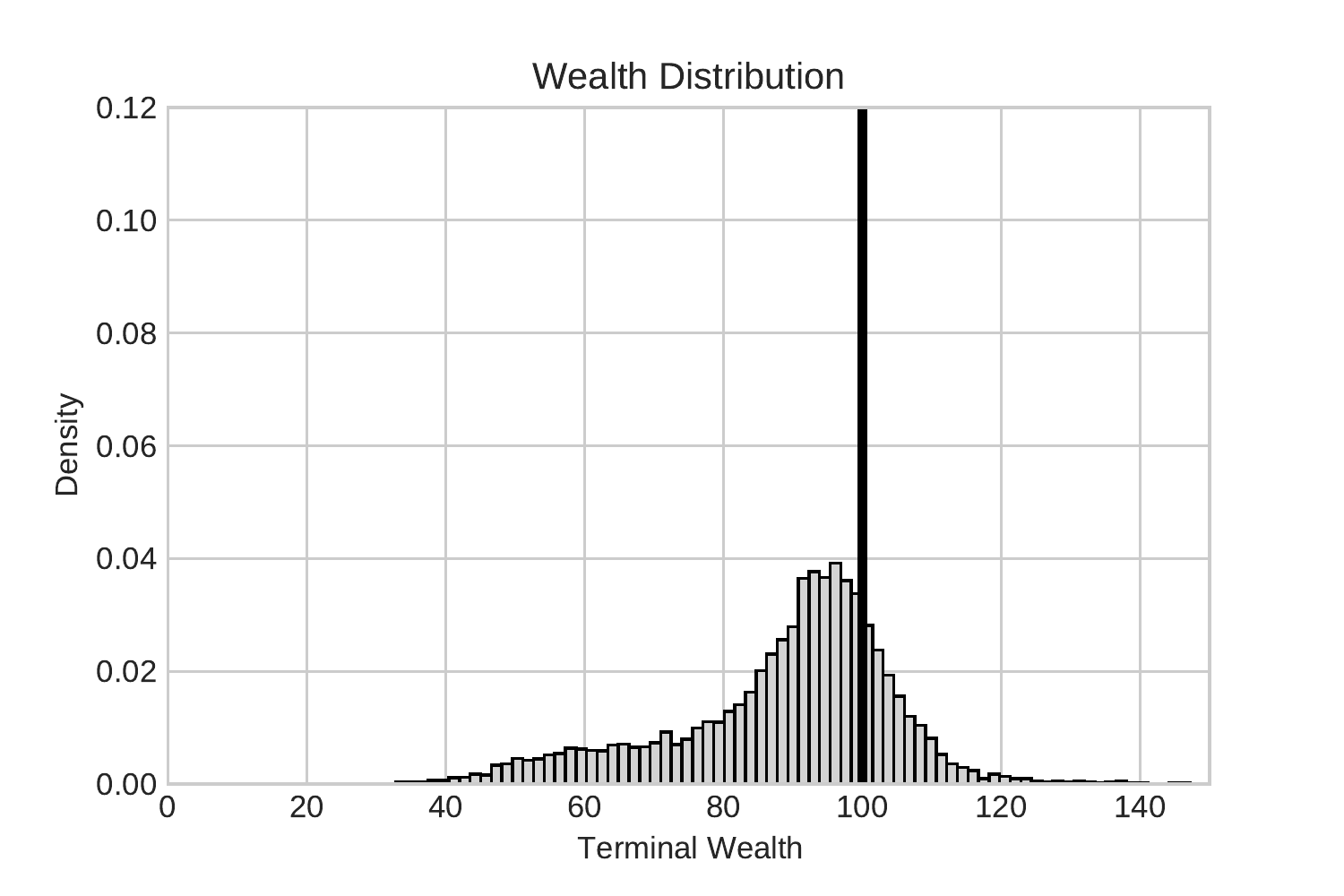}\includegraphics[width=0.5\textwidth]{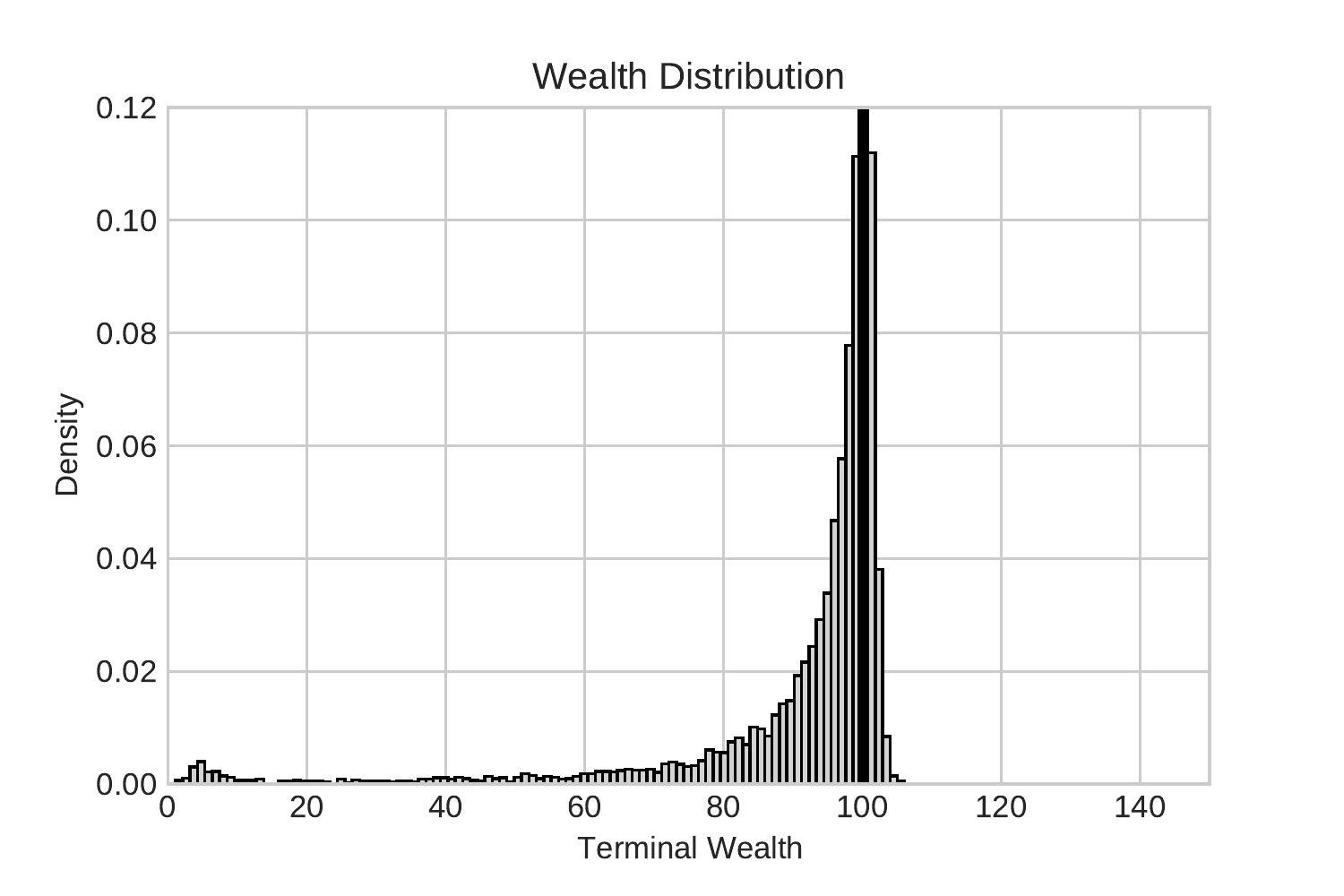}
$p=5$\\
\includegraphics[width=0.5\textwidth]{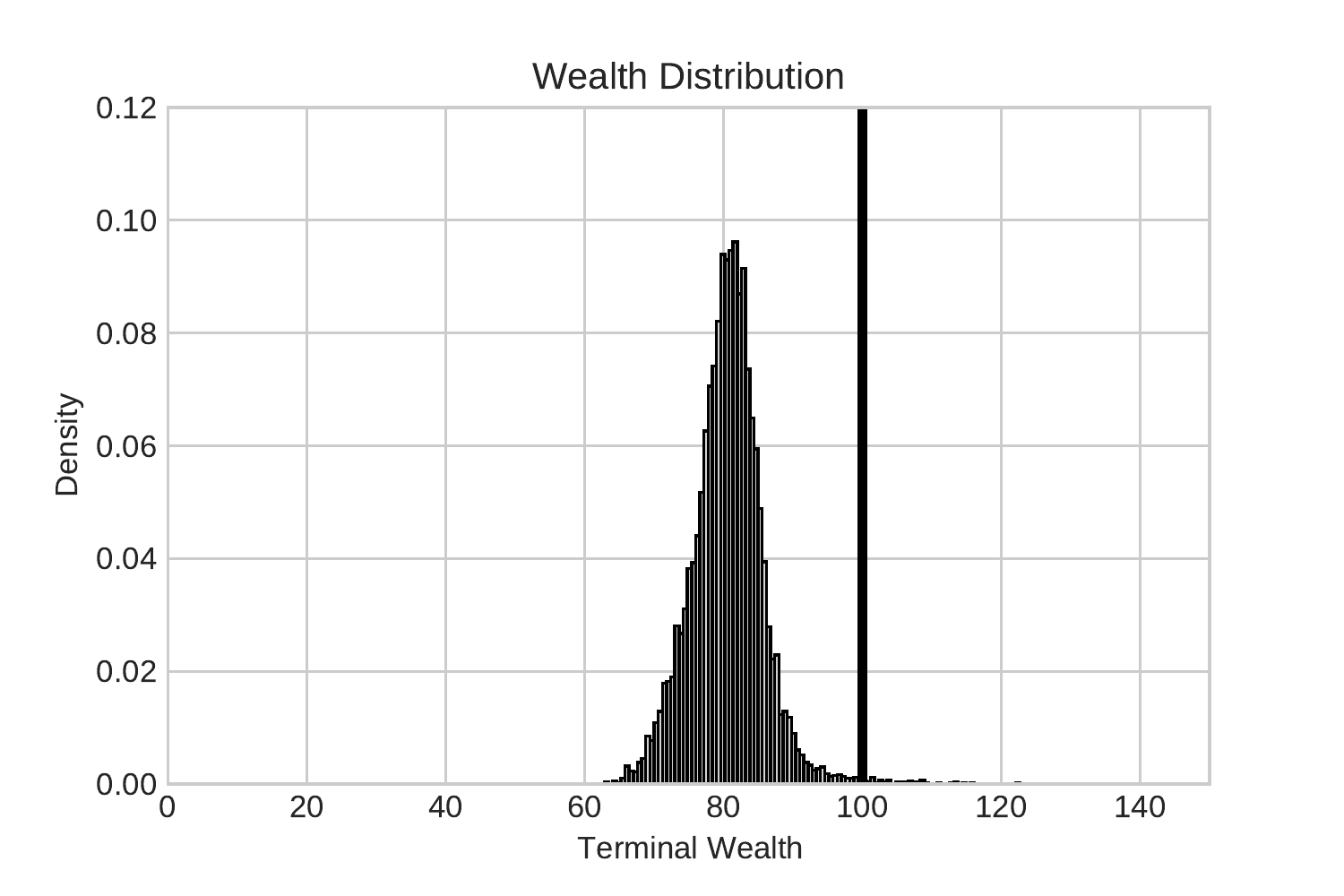}\includegraphics[width=0.5\textwidth]{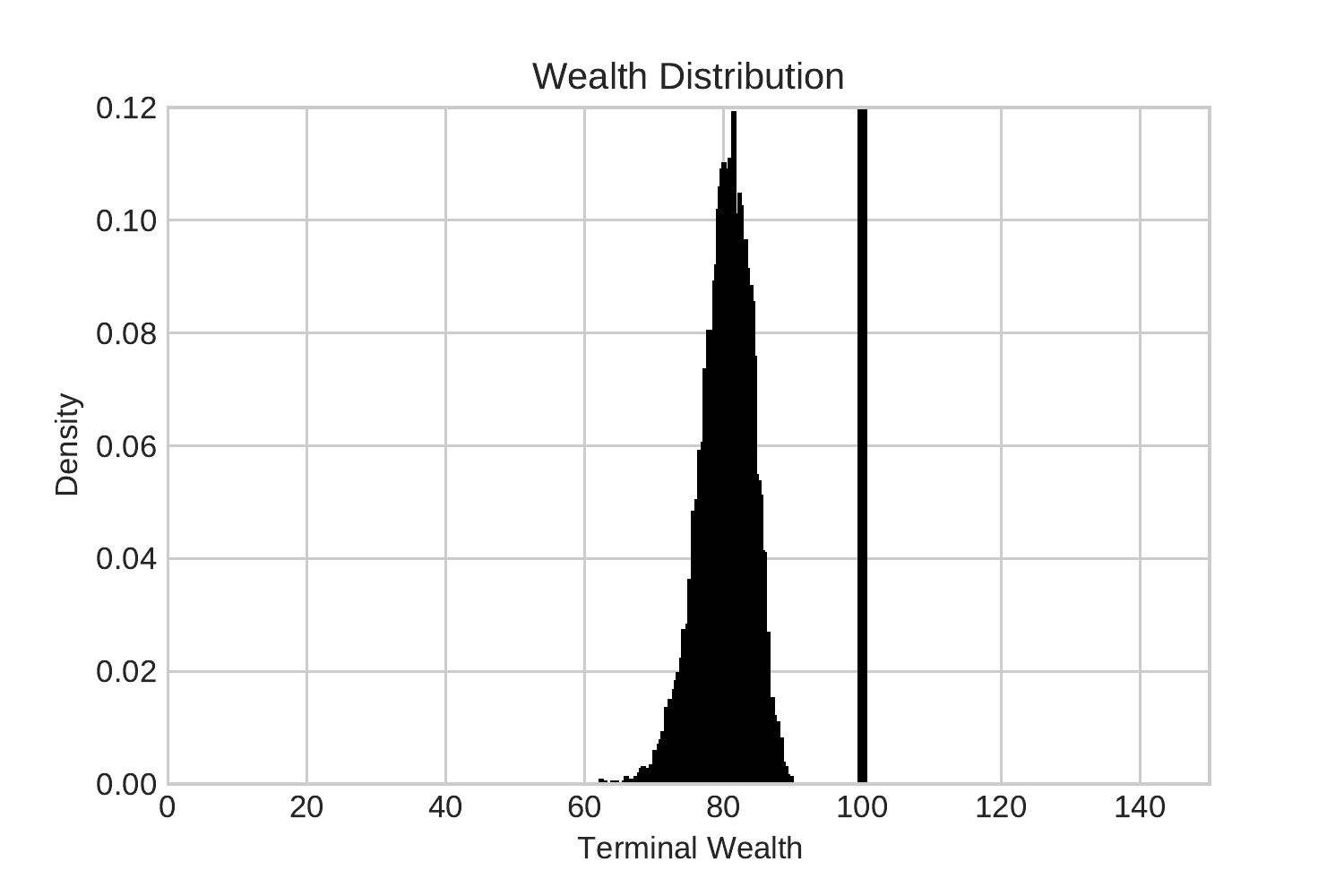}
\caption{For different choices of the risk aversion $p$, the empirical probability density function of the final payoffs depict the performance of a trained artificial financial agent in the absence of transaction cost in the left column compared to naively applying the corresponding continuous time optimal delta hedging strategy in the right column. The solid line represents the primary target payoff.}
\end{figure}

\begin{figure}
\textbf{Payoff Diagram without Transaction Cost}\\
\vspace{1em}
\begin{center}
Deep Hedging \hfill Discrete Delta Hedging
\end{center}
$p=1$\\
\includegraphics[width=0.5\textwidth]{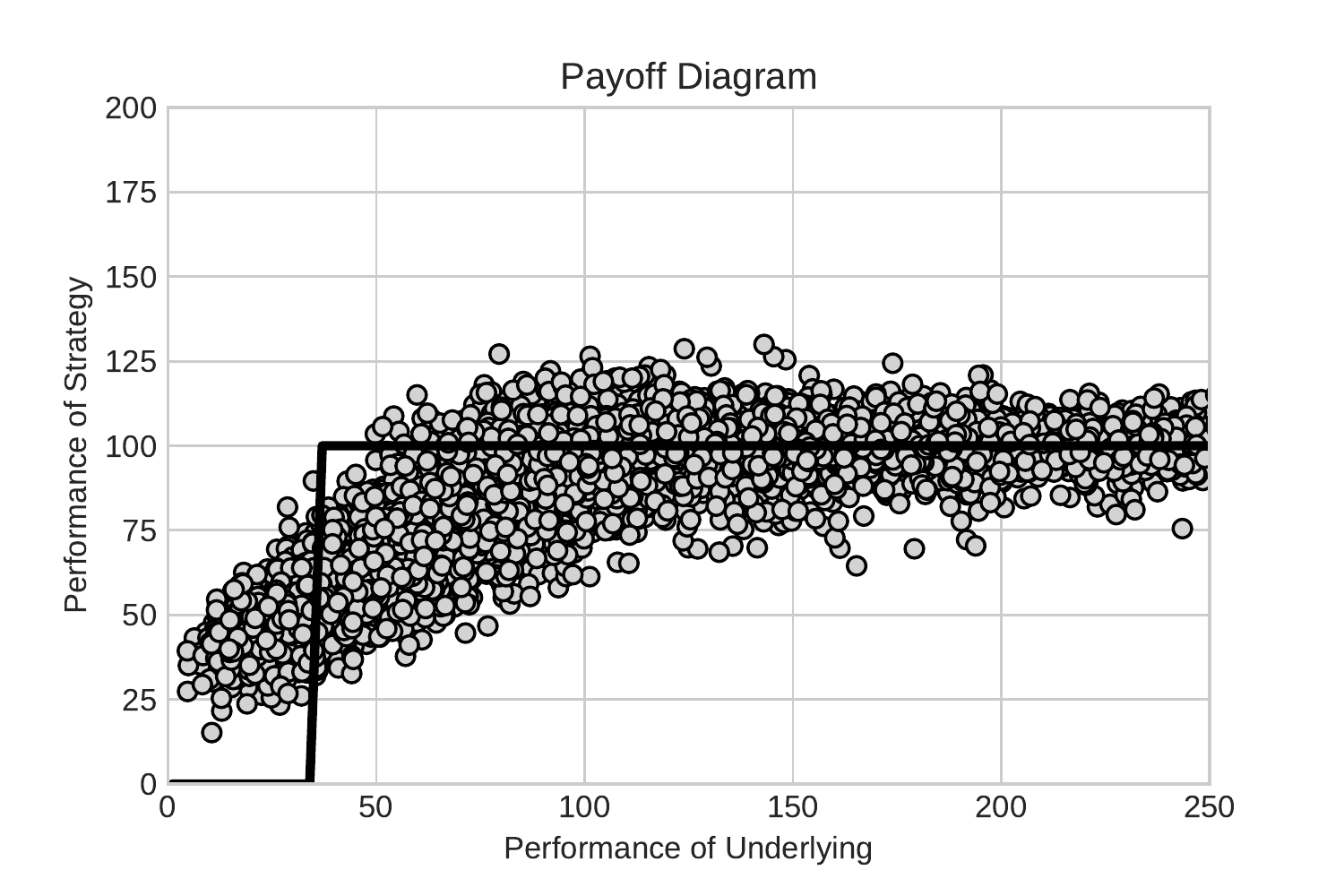}\includegraphics[width=0.5\textwidth]{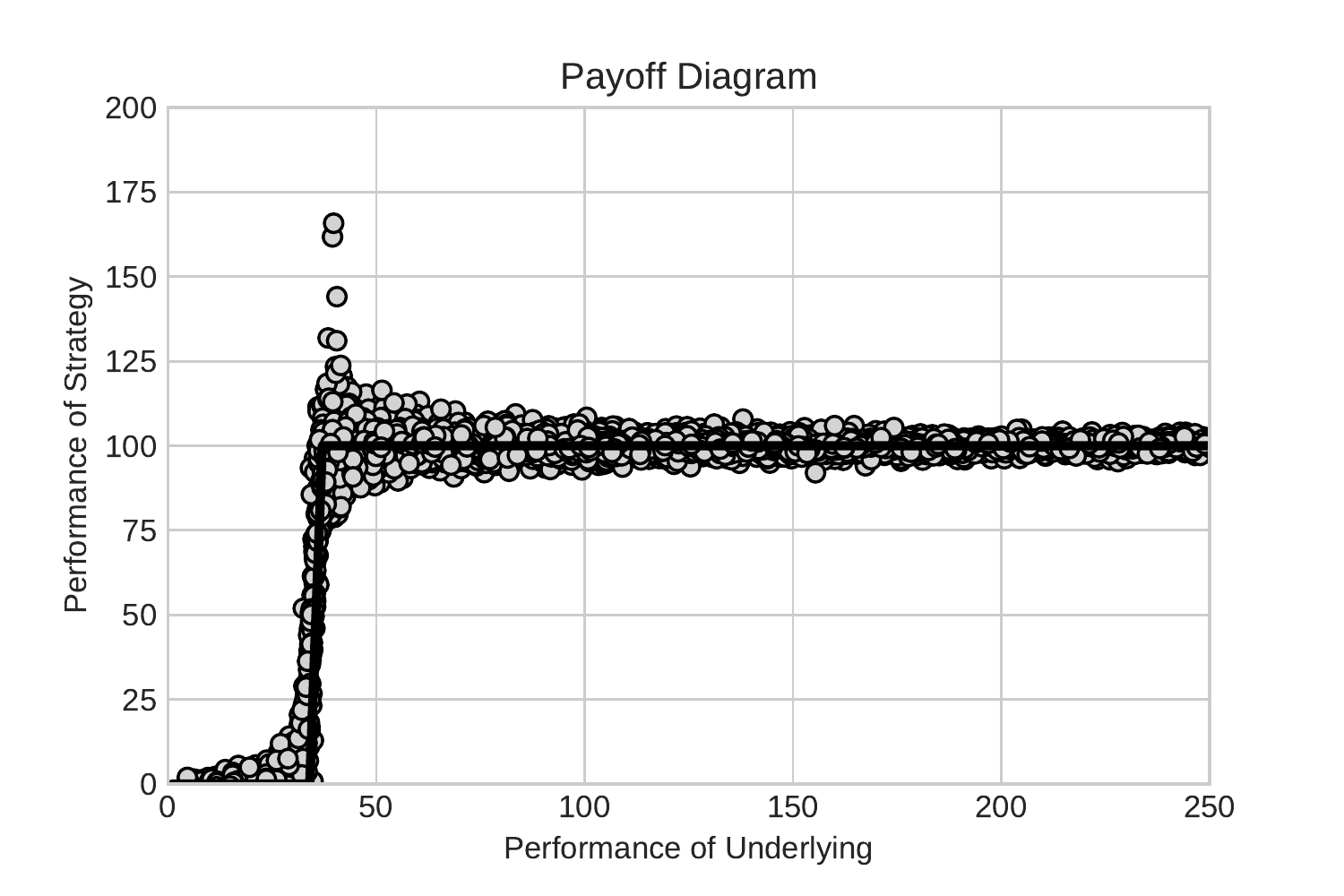}
$p=1.5$\\
\includegraphics[width=0.5\textwidth]{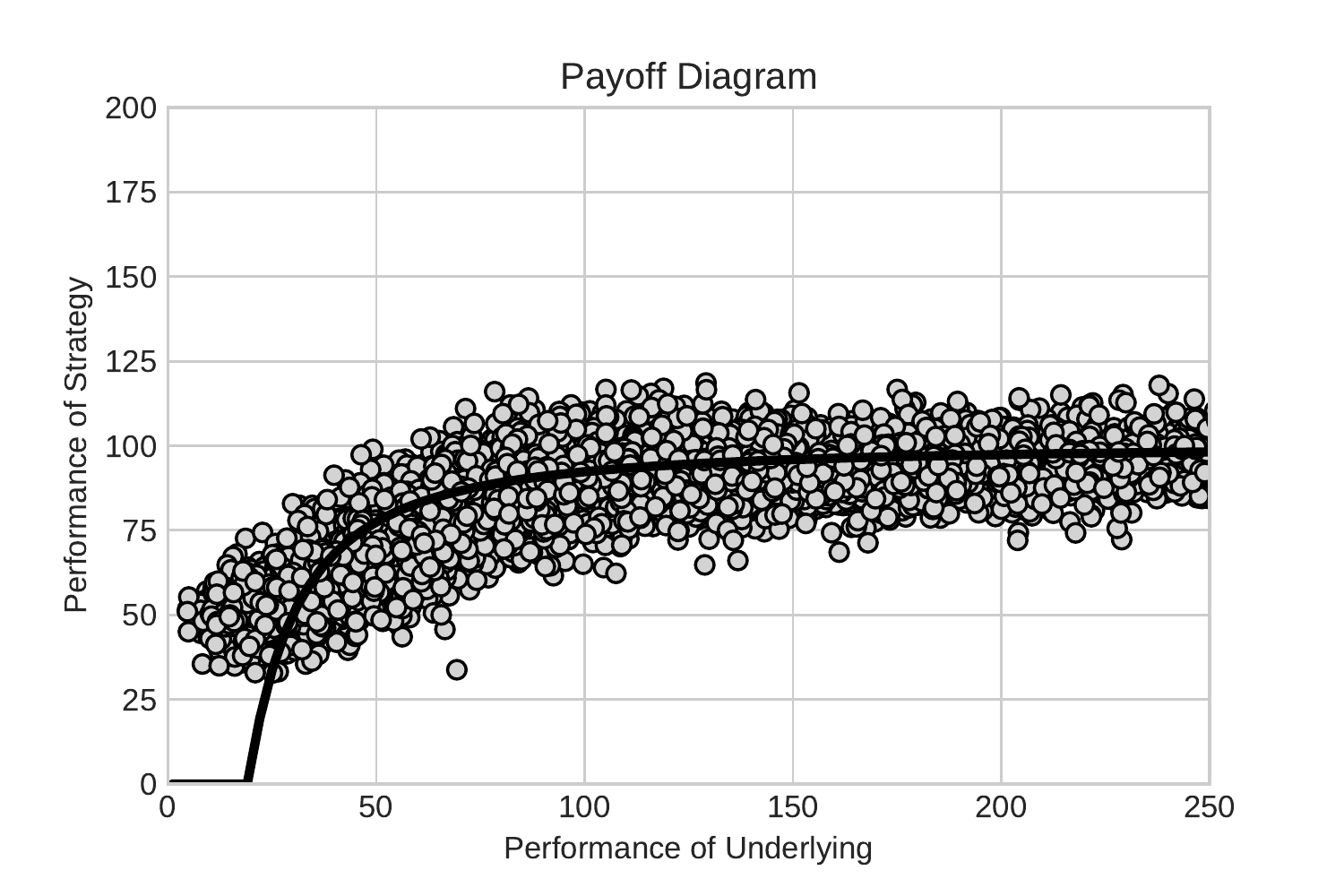}\includegraphics[width=0.5\textwidth]{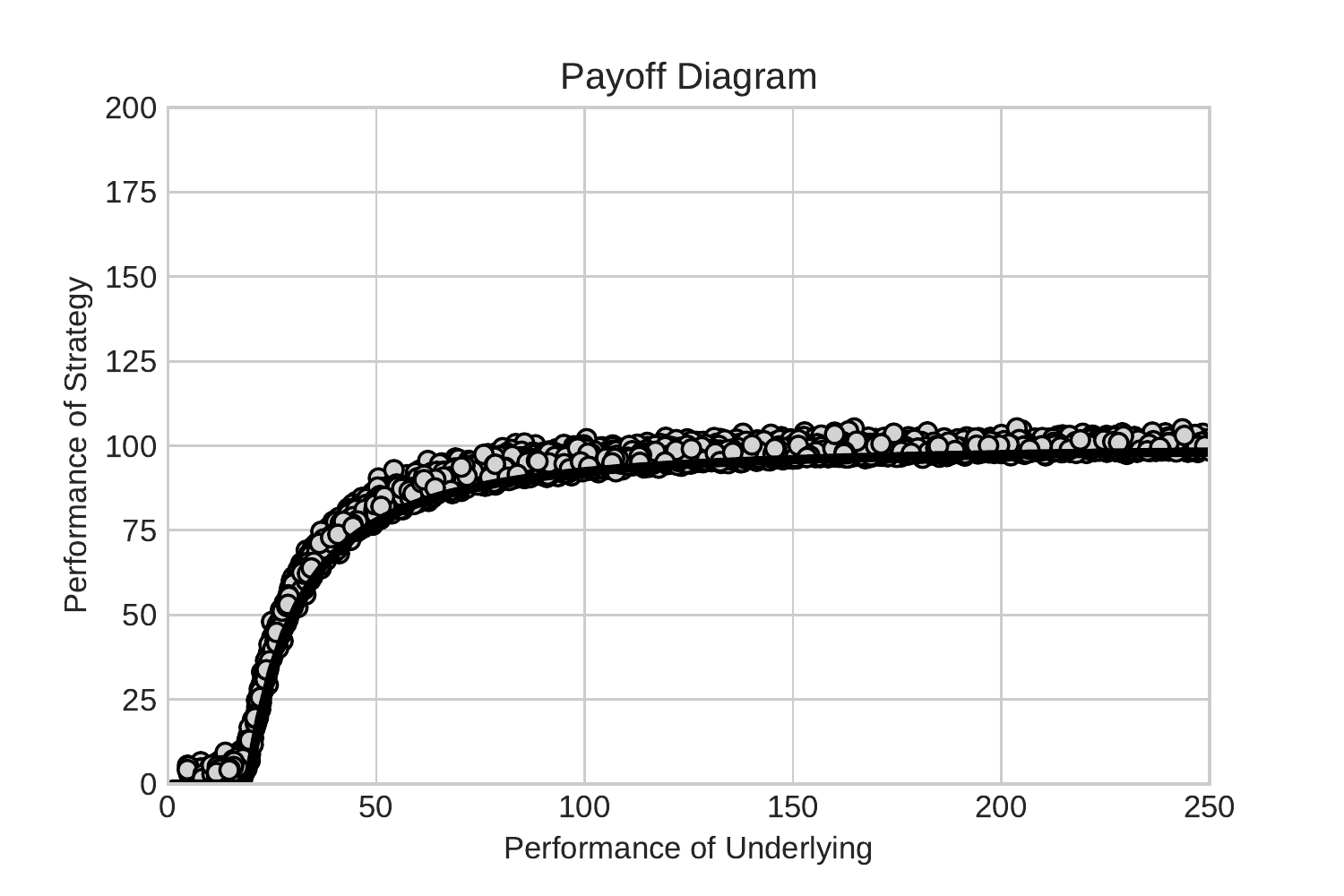}
$p=5$\\
\includegraphics[width=0.5\textwidth]{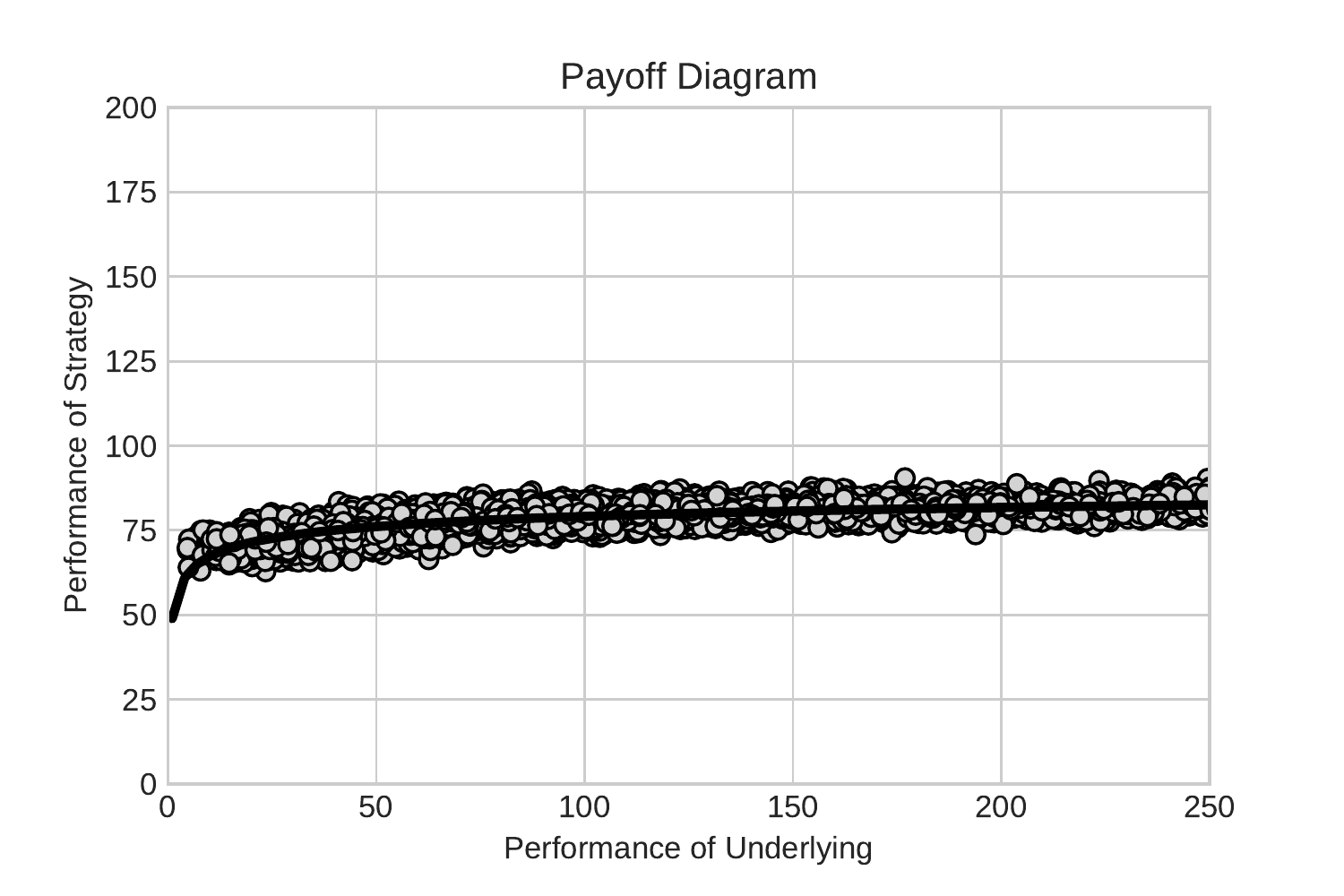}\includegraphics[width=0.5\textwidth]{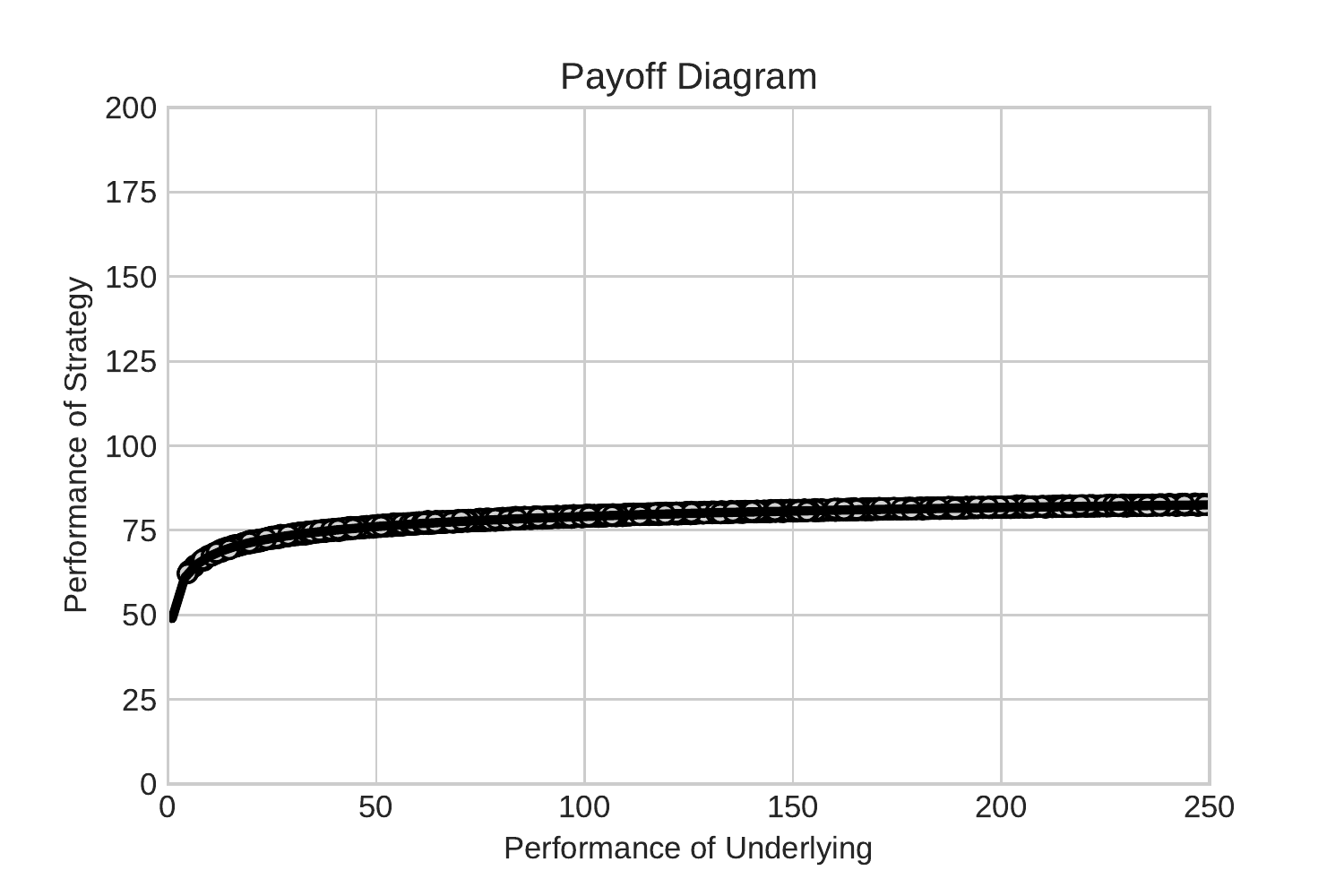}
\caption{For different choices of the risk aversion $p$, the scatter plots depict the final payoffs depending on the performance of the underlying for a trained artificial financial agent in the absence of transaction cost in the left column compared to naively applying the corresponding continuous time optimal delta hedging strategy in the right column. The solid line represents the secondary target payoff originating from the duality result of the continuous-time problem.}
\end{figure}

\begin{figure}
\textbf{Final Wealth Distribution with Transaction Cost}\\
\vspace{1em}
\begin{center}
Deep Hedging \hfill Discrete Delta Hedging
\end{center}
$p=1$\\
\includegraphics[width=0.5\textwidth]{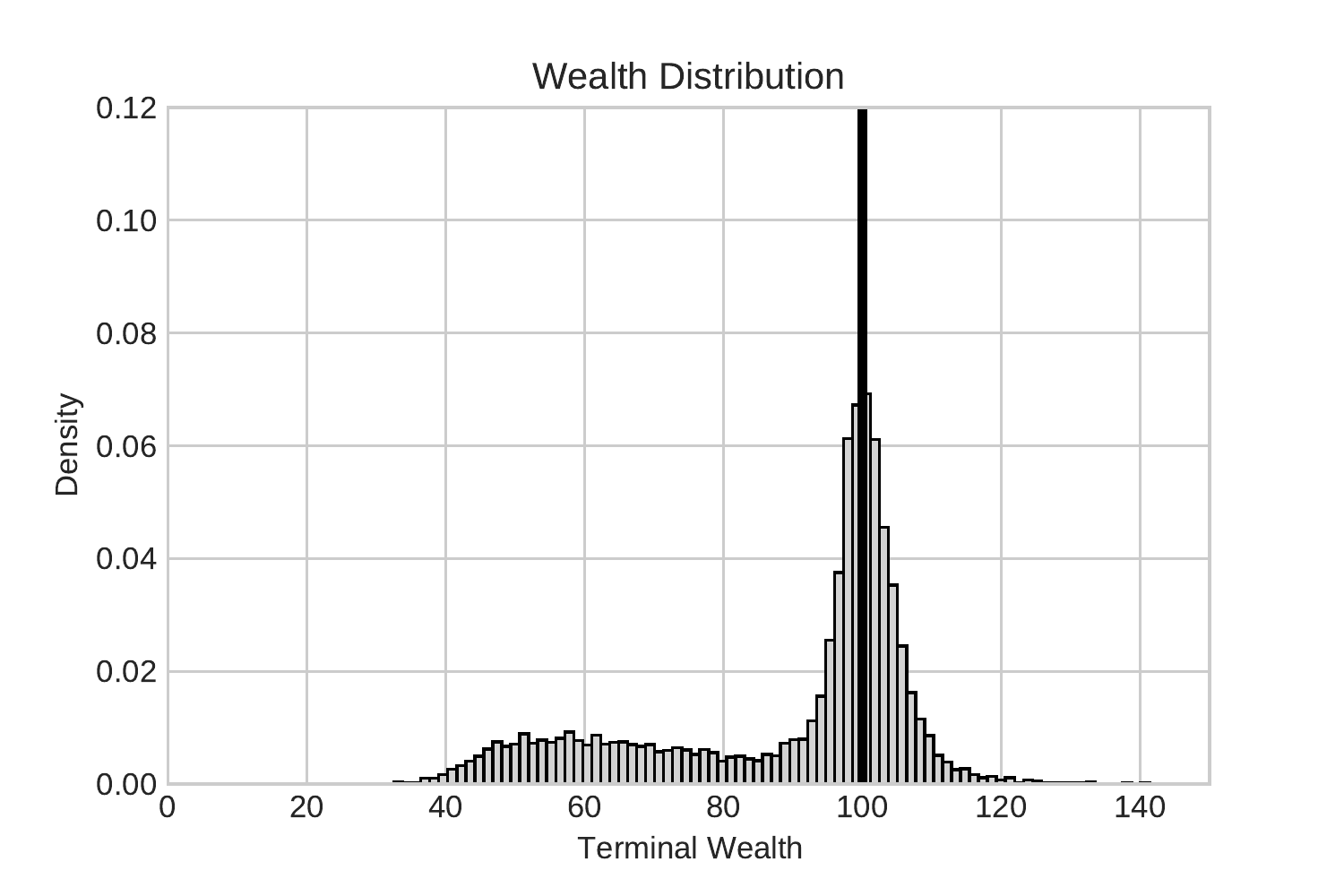}\includegraphics[width=0.5\textwidth]{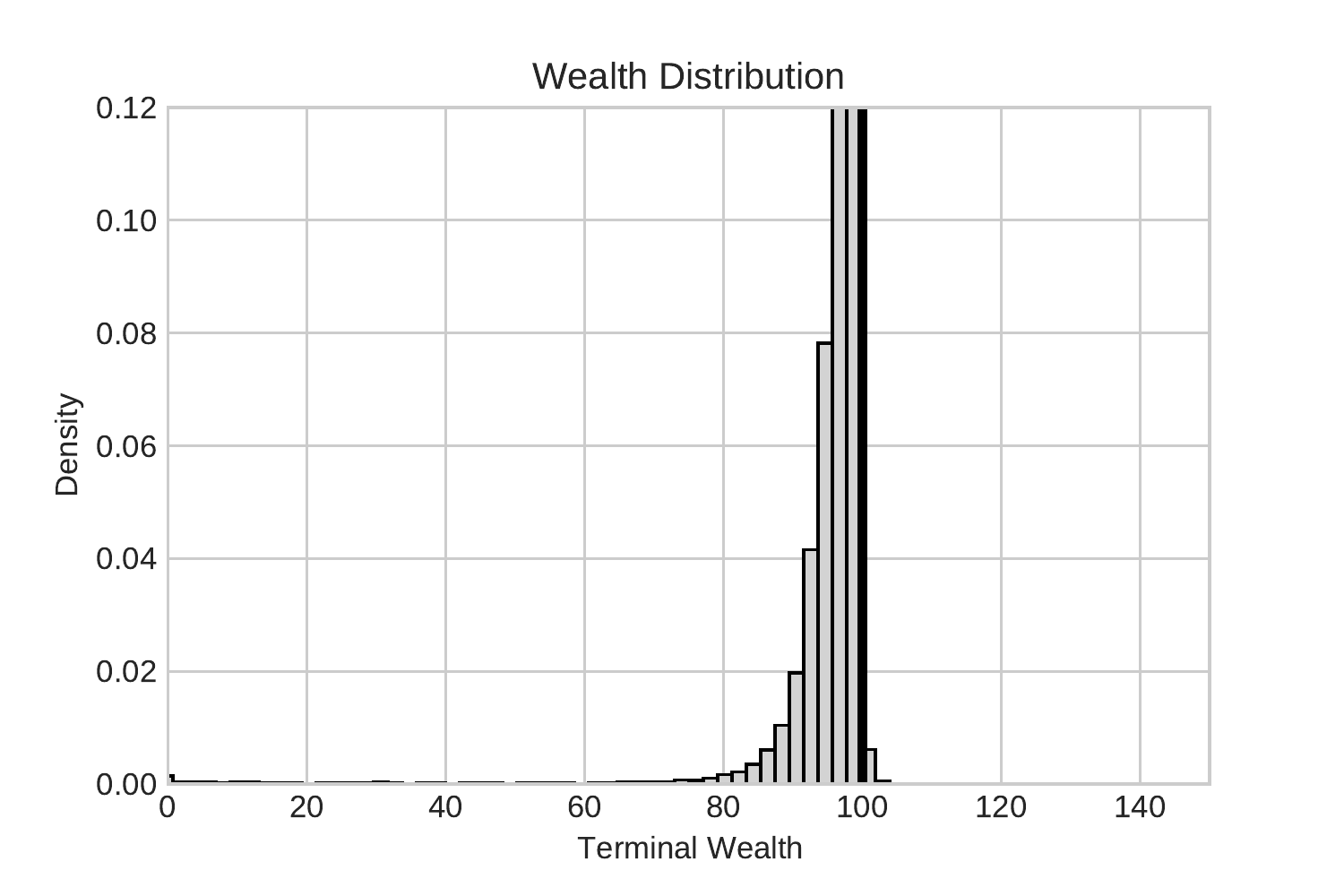}
$p=1.5$\\
\includegraphics[width=0.5\textwidth]{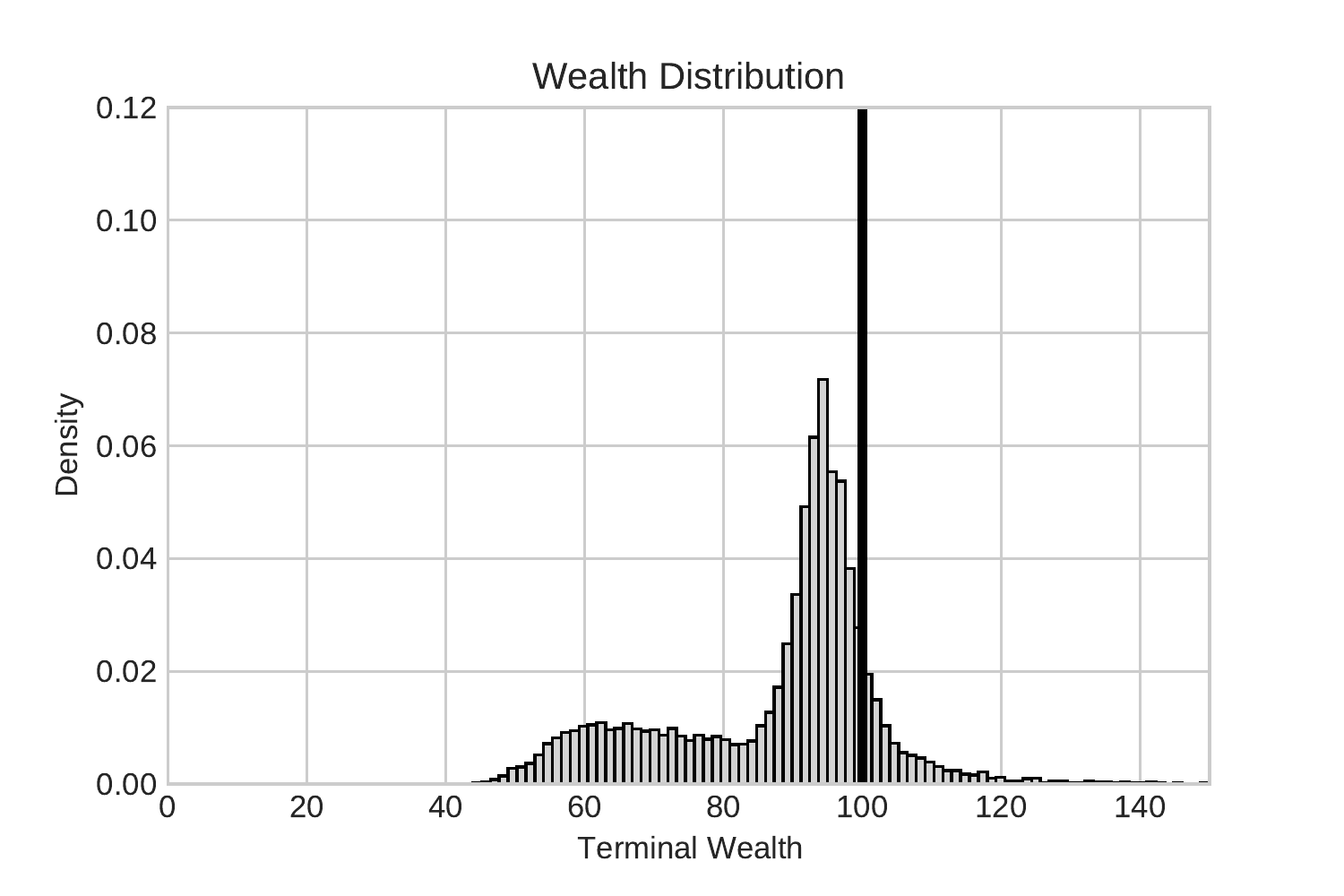}\includegraphics[width=0.5\textwidth]{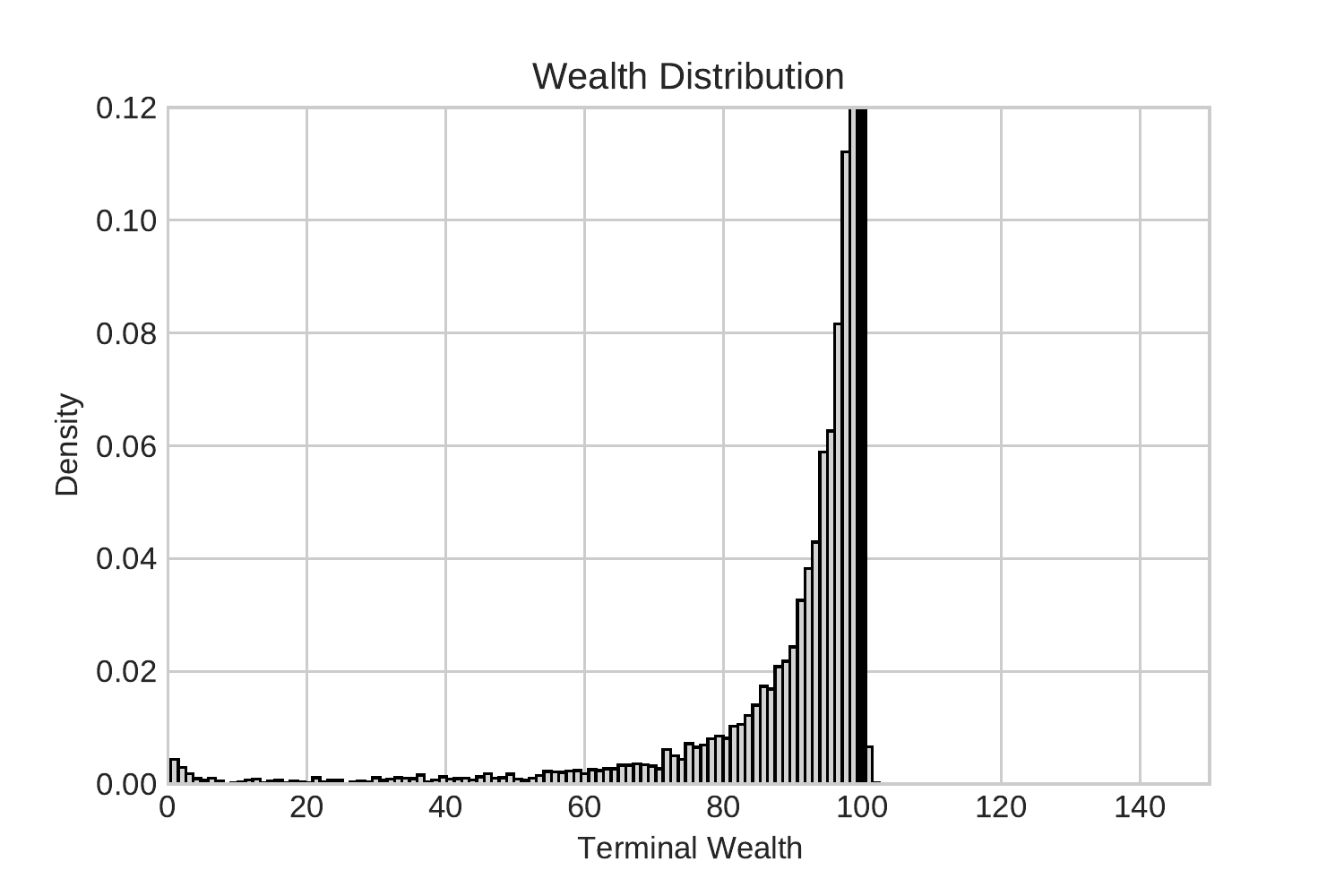}
$p=5$\\
\includegraphics[width=0.5\textwidth]{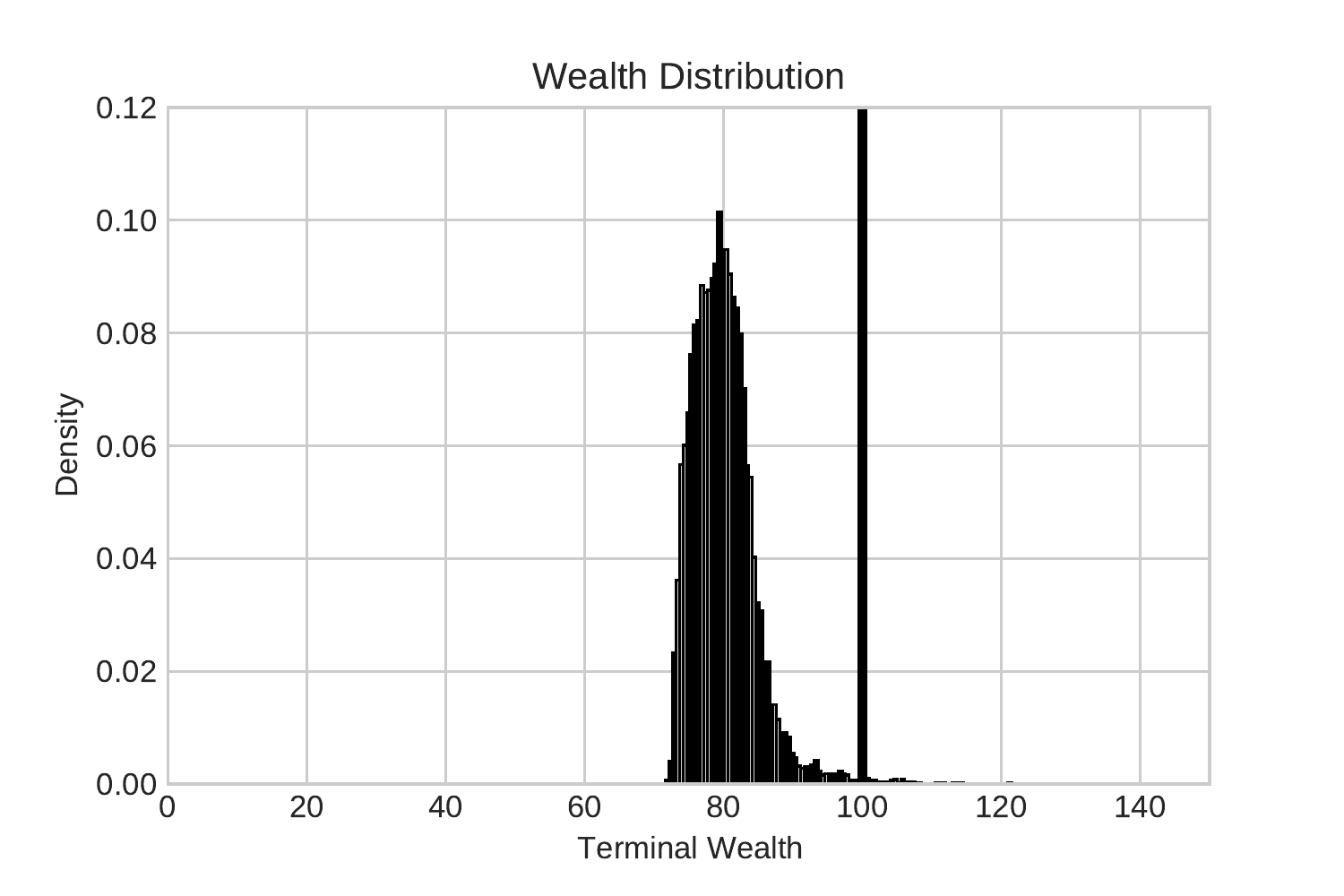}\includegraphics[width=0.5\textwidth]{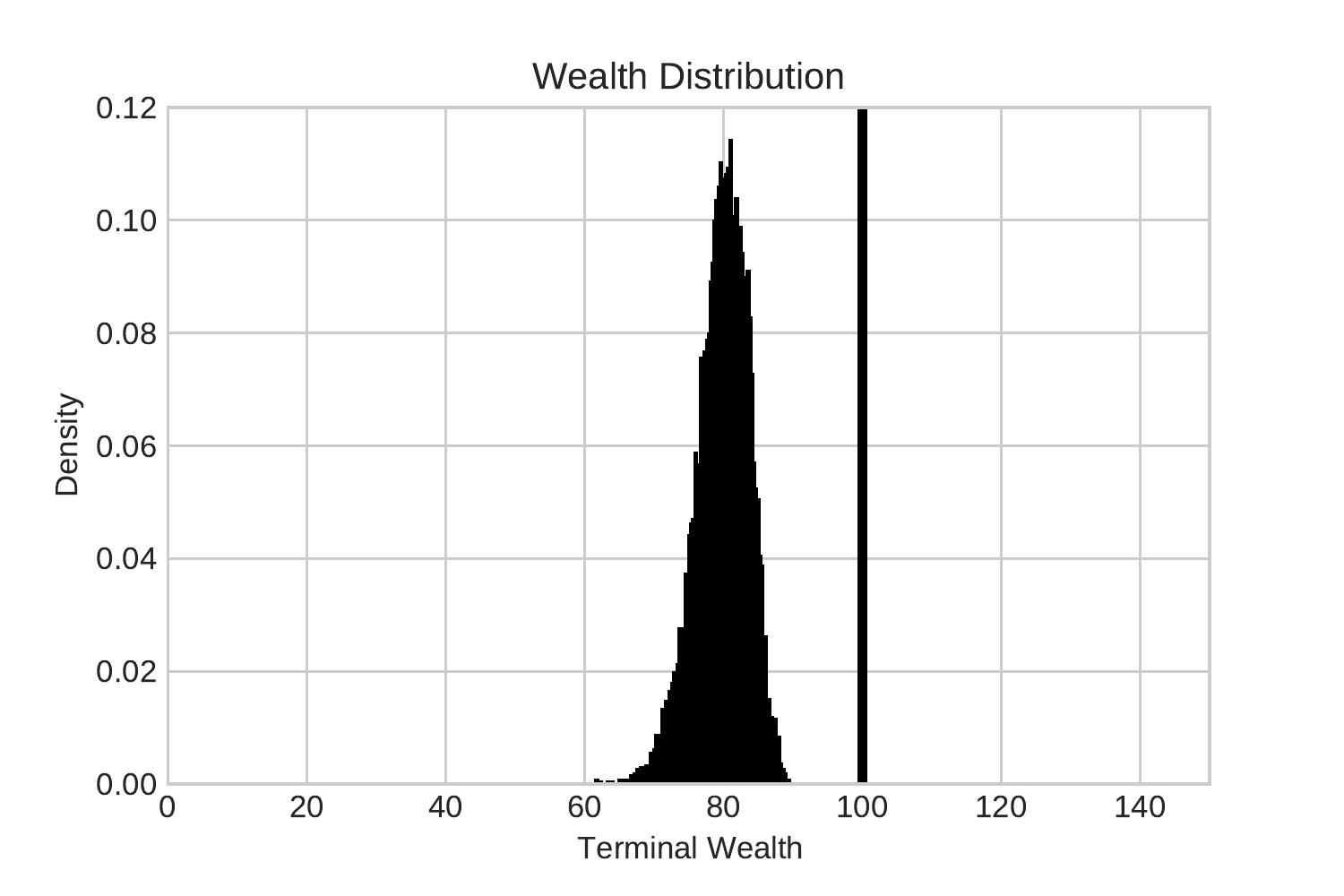}
\caption{For different choices of the risk aversion $p$, the empirical probability density function of the final payoffs depict the performance of a trained artificial financial agent in the presence of proportional transaction cost in the left column compared to naively applying the corresponding continuous time optimal delta hedging strategy in the right column. The solid line represents the primary target payoff.}
\end{figure}

\begin{figure}
\textbf{Payoff Diagram with Transaction Cost}\\
\vspace{1em}
\begin{center}
Deep Hedging \hfill Discrete Delta Hedging
\end{center}
$p=1$\\
\includegraphics[width=0.5\textwidth]{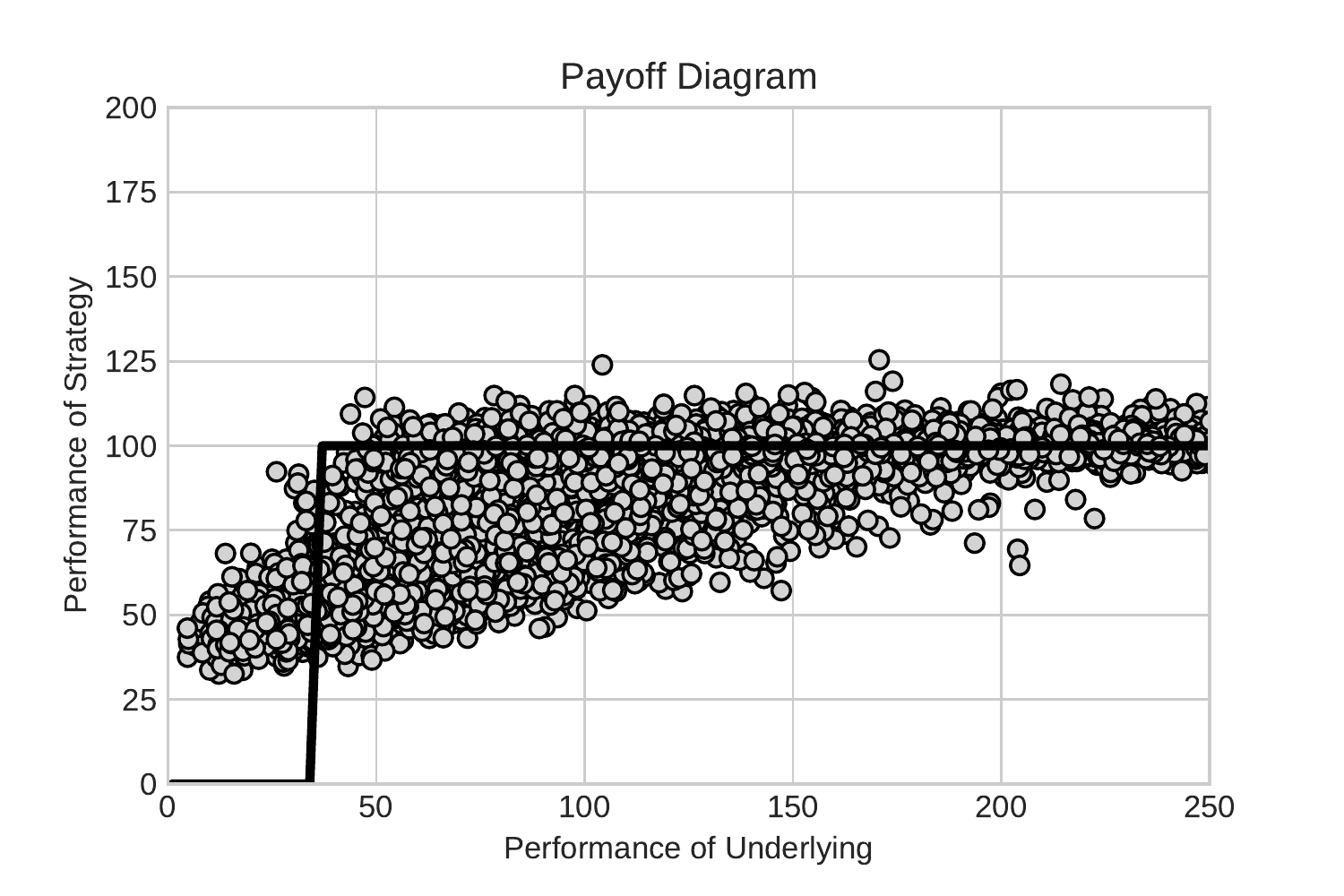}\includegraphics[width=0.5\textwidth]{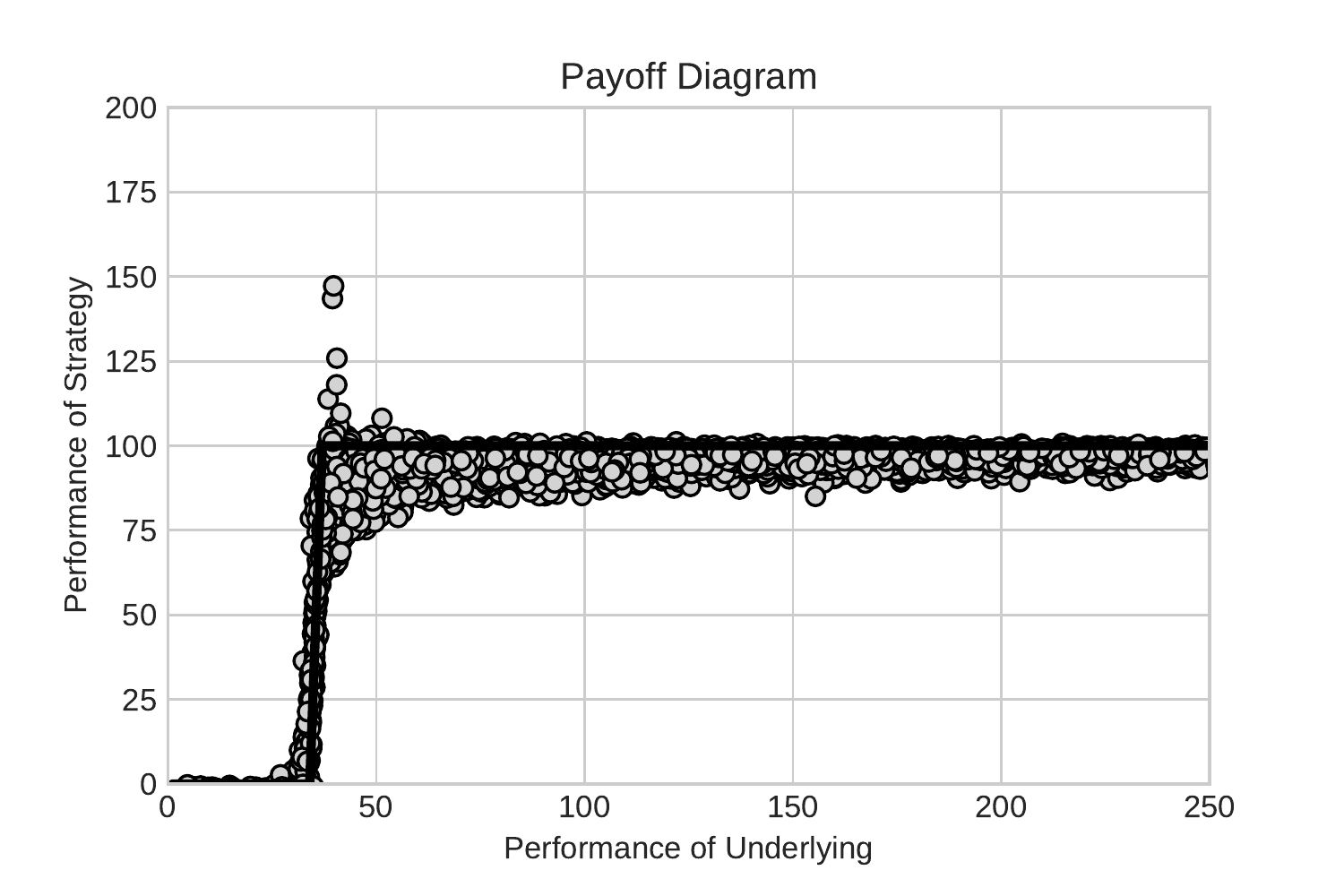}
$p=1.5$\\
\includegraphics[width=0.5\textwidth]{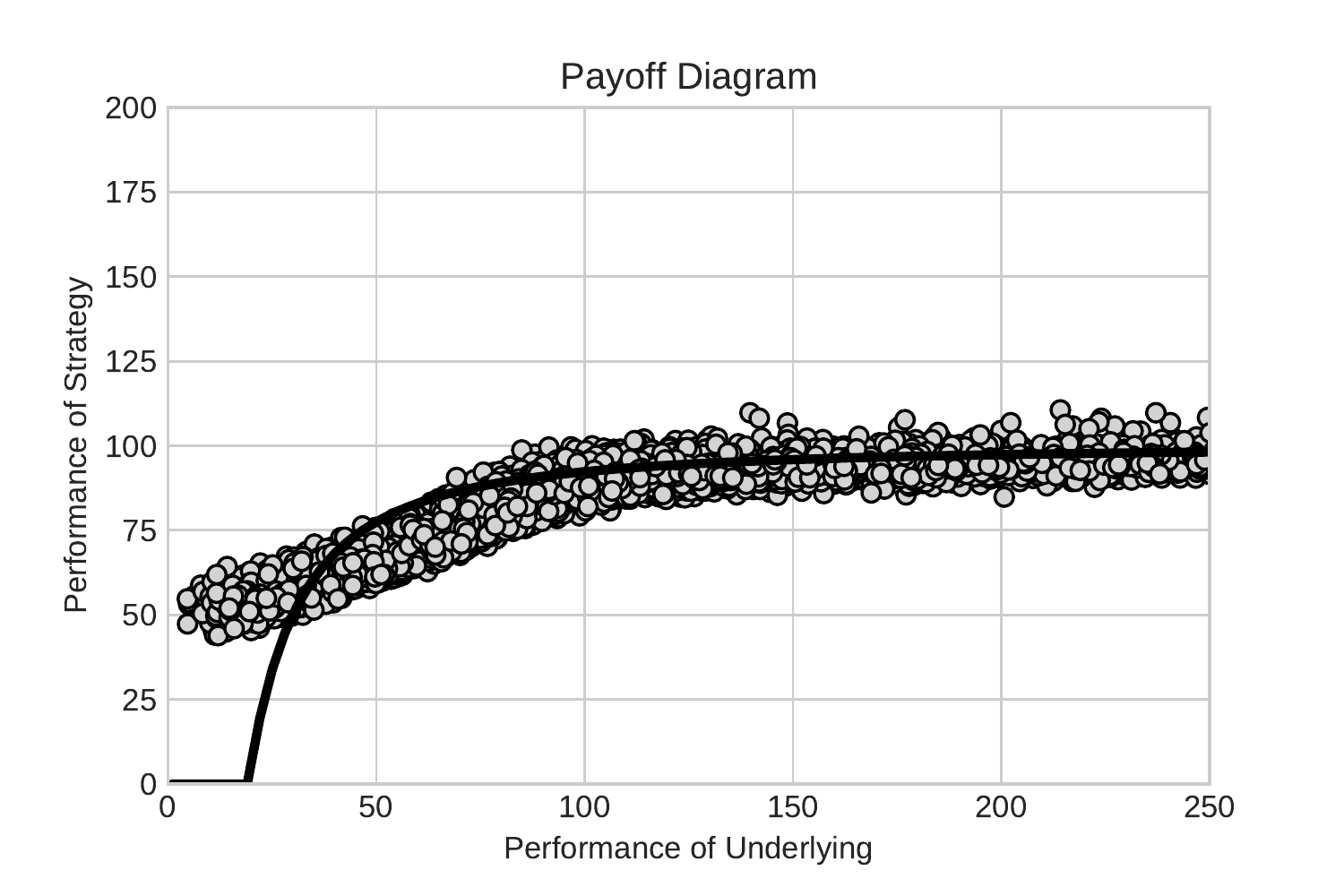}\includegraphics[width=0.5\textwidth]{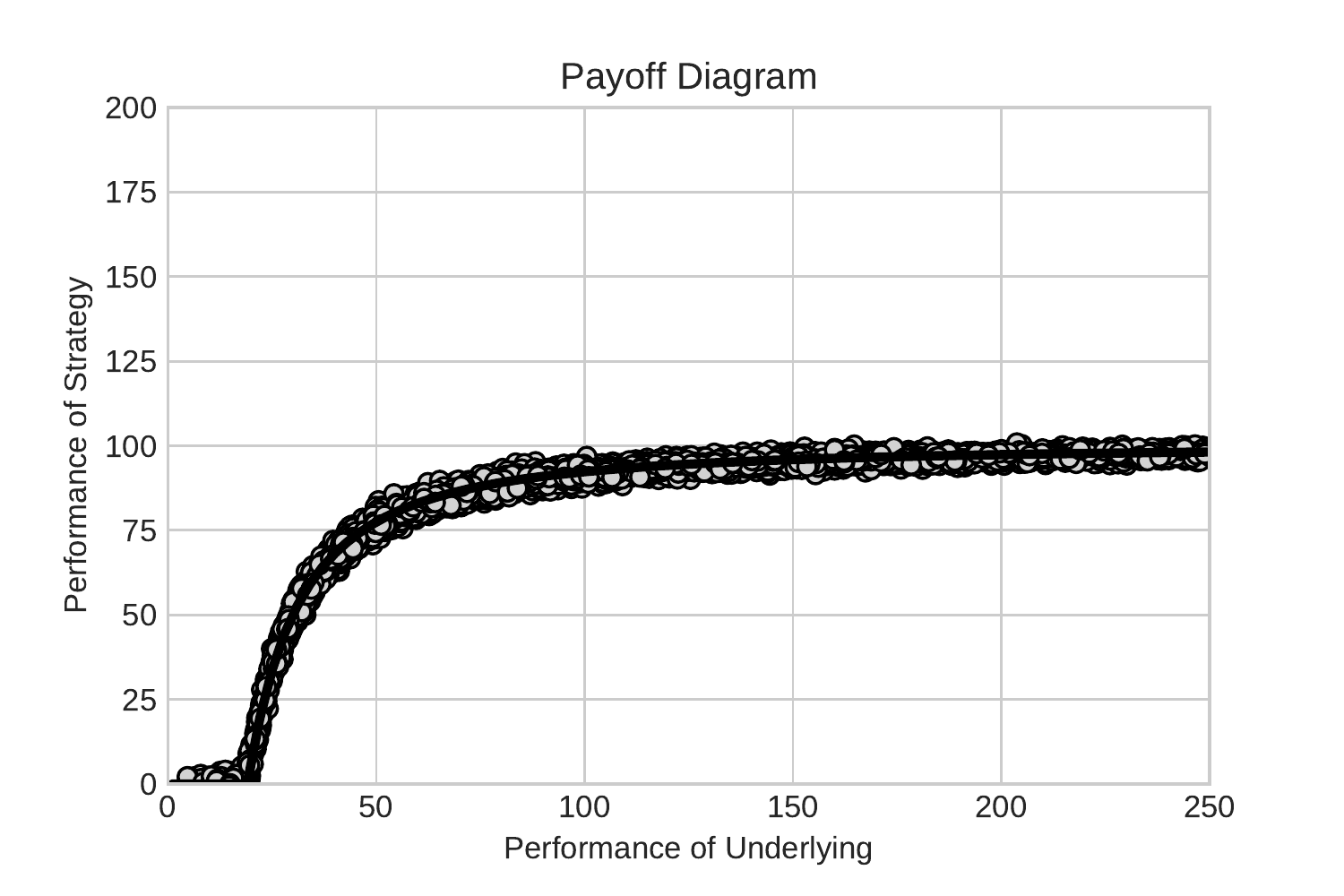}
$p=5$\\
\includegraphics[width=0.5\textwidth]{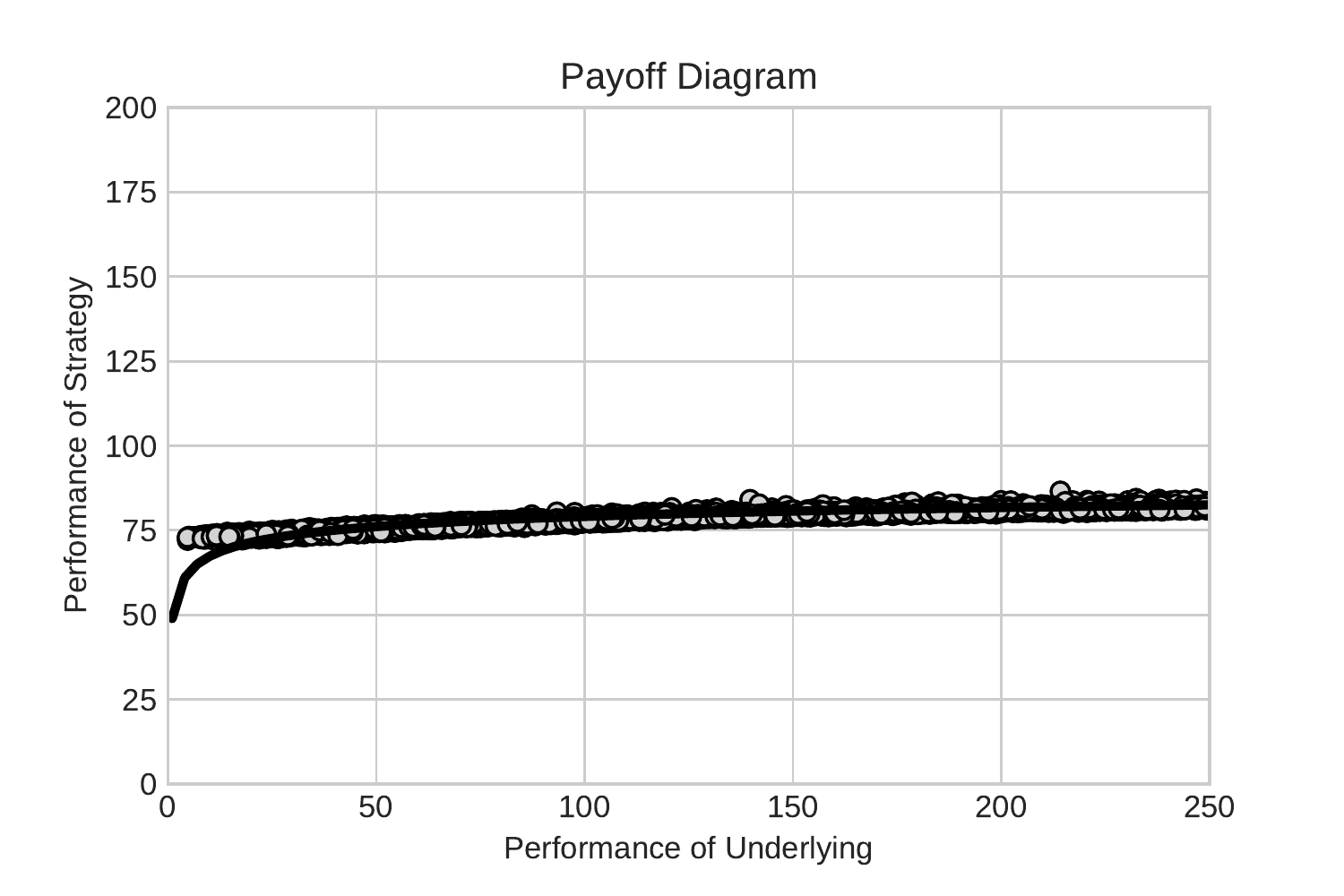}\includegraphics[width=0.5\textwidth]{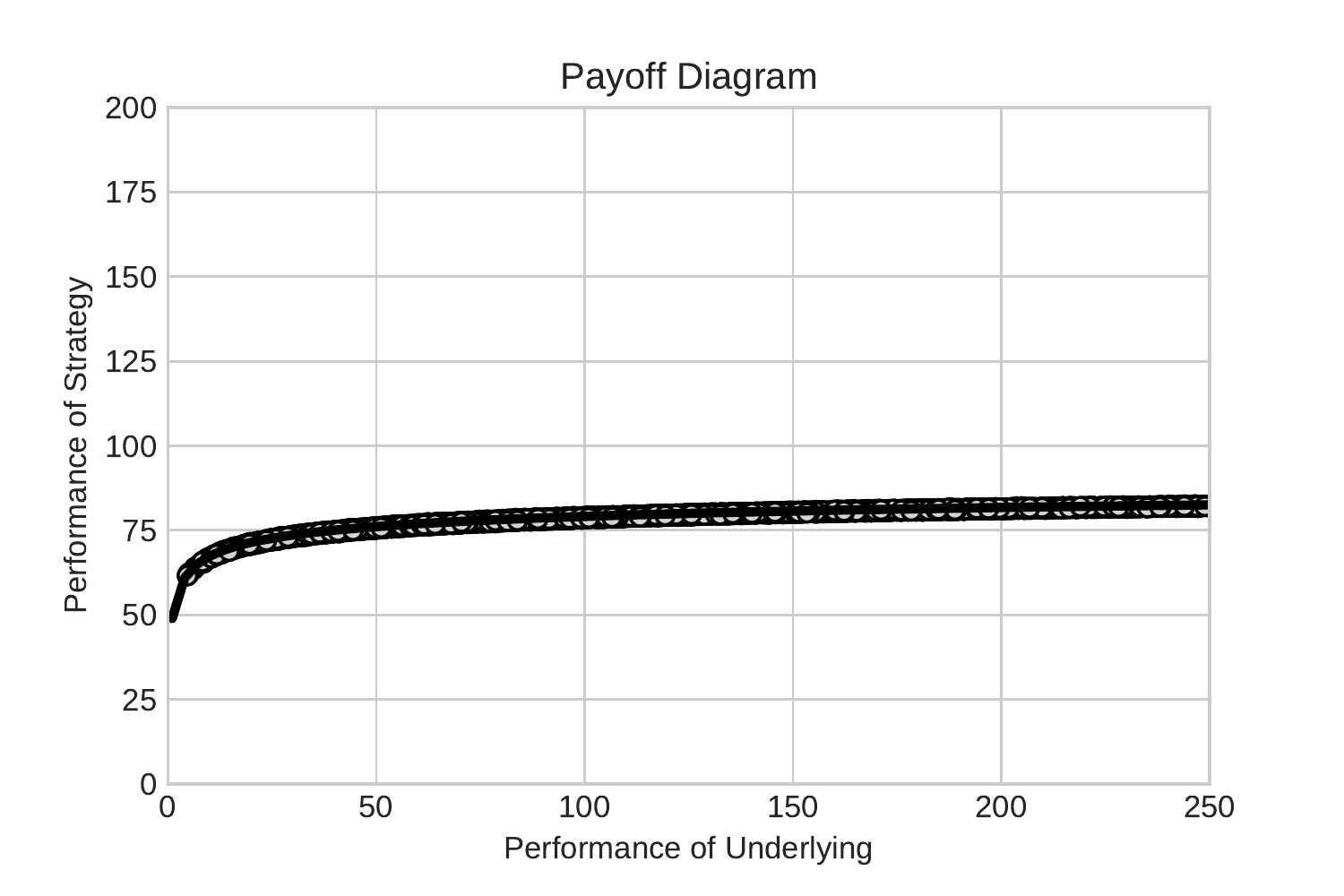}
\caption{For different choices of the risk aversion $p$, the scatter plots depict the final payoffs depending on the performance of the underlying for a trained artificial financial agent in the presence of proportional transaction cost in the left column compared to naively applying the corresponding continuous time optimal delta hedging strategy in the right column. The solid line represents the secondary target payoff originating from the duality result of the continuous-time problem.}
\end{figure}

\begin{table}
\centering
\begin{tabular}{|l||p{1.7cm}|p{1.7cm}|p{1.7cm}|p{1.7cm}|p{1.7cm}|}
\hline
\multirow{2}{*}{Mean}&Theoretical&\multicolumn{2}{c|}{Deep Hedging}&\multicolumn{2}{c|}{Discrete Delta Hedging}\\
\hhline{|~||-|-|-|-|-|}
&\multicolumn{1}{c|}{$\kappa=0$}&\multicolumn{1}{r|}{$\phantom{05}\kappa=0\phantom{0\,0}$}&\multicolumn{1}{c|}{$\kappa=0.005$}&\multicolumn{1}{r|}{$\phantom{05}\kappa=0\phantom{0\,0}$}&\multicolumn{1}{c|}{$\kappa=0.005$}\\
\hline\hline
$p=1$&\multicolumn{1}{r|}{$93.18\phantom{0\,0}$}&\multicolumn{1}{r|}{$91.07\phantom{0\,0}$}&\multicolumn{1}{r|}{$89.10\phantom{0\,0}$}&\multicolumn{1}{r|}{$93.19\phantom{0\,0}$}&\multicolumn{1}{r|}{$88.39\phantom{0\,0}$}\\
$p=1.5$&\multicolumn{1}{r|}{$88.52\phantom{0\,0}$}&\multicolumn{1}{r|}{$88.53\phantom{0\,0}$}&\multicolumn{1}{r|}{$87.44\phantom{0\,0}$}&\multicolumn{1}{r|}{$91.55\phantom{0\,0}$}&\multicolumn{1}{r|}{$87.97\phantom{0\,0}$}\\
$p=5$&\multicolumn{1}{r|}{$80.17\phantom{0\,0}$}&\multicolumn{1}{r|}{$80.50\phantom{0\,0}$}&\multicolumn{1}{r|}{$79.89\phantom{0\,0}$}&\multicolumn{1}{r|}{$80.28\phantom{0\,0}$}&\multicolumn{1}{r|}{$79.81\phantom{0\,0}$}\\
\hline
\multicolumn{6}{c}{}\\
\hline
\multirow{2}{*}{$5\%$-Quantile}&Theoretical&\multicolumn{2}{c|}{Deep Hedging}&\multicolumn{2}{c|}{Discrete Delta Hedging}\\
\hhline{|~||-|-|-|-|-|}
&\multicolumn{1}{c|}{$\kappa=0$}&\multicolumn{1}{r|}{$\phantom{05}\kappa=0\phantom{0\,0}$}&\multicolumn{1}{c|}{$\kappa=0.005$}&\multicolumn{1}{r|}{$\phantom{05}\kappa=0\phantom{0\,0}$}&\multicolumn{1}{c|}{$\kappa=0.005$}\\
\hline\hline
$p=1$&\multicolumn{1}{r|}{$0\phantom{.00}\phantom{0\,0}$}&\multicolumn{1}{r|}{$48.71\phantom{0\,0}$}&\multicolumn{1}{r|}{$48.98\phantom{0\,0}$}&\multicolumn{1}{r|}{$4.05\phantom{0\,0}$}&\multicolumn{1}{r|}{$-2.98\phantom{0\,0}$}\\
$p=1.5$&\multicolumn{1}{r|}{$49.63\phantom{0\,0}$}&\multicolumn{1}{r|}{$54.63\phantom{0\,0}$}&\multicolumn{1}{r|}{$57.17\phantom{0\,0}$}&\multicolumn{1}{r|}{$54.96\phantom{0\,0}$}&\multicolumn{1}{r|}{$48.84\phantom{0\,0}$}\\
$p=5$&\multicolumn{1}{r|}{$73.59\phantom{0\,0}$}&\multicolumn{1}{r|}{$71.96\phantom{0\,0}$}&\multicolumn{1}{r|}{$74.00\phantom{0\,0}$}&\multicolumn{1}{r|}{$73.64\phantom{0\,0}$}&\multicolumn{1}{r|}{$73.11\phantom{0\,0}$}\\

\hline
\multicolumn{6}{c}{}\\
\hline
\multirow{2}{*}{Success Rate}&Theoretical&\multicolumn{2}{c|}{Deep Hedging}&\multicolumn{2}{c|}{Discrete Delta Hedging}\\
\hhline{|~||-|-|-|-|-|}
&\multicolumn{1}{c|}{$\kappa=0$}&\multicolumn{1}{r|}{$\phantom{05}\kappa=0\phantom{0\,0}$}&\multicolumn{1}{c|}{$\kappa=0.005$}&\multicolumn{1}{r|}{$\phantom{05}\kappa=0\phantom{0\,0}$}&\multicolumn{1}{c|}{$\kappa=0.005$}\\
\hline\hline
$p=1$&\multicolumn{1}{r|}{$0.93\phantom{0\,0}$}&\multicolumn{1}{r|}{$0.41\phantom{0\,0}$}&\multicolumn{1}{r|}{$0.36\phantom{0\,0}$}&\multicolumn{1}{r|}{$0.47\phantom{0\,0}$}&\multicolumn{1}{r|}{$0.01\phantom{0\,0}$}\\
$p=1.5$&\multicolumn{1}{r|}{$0\phantom{.00}\phantom{0\,0}$}&\multicolumn{1}{r|}{$0.21\phantom{0\,0}$}&\multicolumn{1}{r|}{$0.12\phantom{0\,0}$}&\multicolumn{1}{r|}{$0.29\phantom{0\,0}$}&\multicolumn{1}{r|}{$0.02\phantom{0\,0}$}\\
$p=5$&\multicolumn{1}{r|}{$0\phantom{.00}\phantom{0\,0}$}&\multicolumn{1}{r|}{$0.00\phantom{0\,0}$}&\multicolumn{1}{r|}{$0.00\phantom{0\,0}$}&\multicolumn{1}{r|}{$0.00\phantom{0\,0}$}&\multicolumn{1}{r|}{$0.00\phantom{0\,0}$}\\
\hline
\multicolumn{6}{c}{}\\
\hline
\multirow{2}{*}{Success Ratio}&Theoretical&\multicolumn{2}{c|}{Deep Hedging}&\multicolumn{2}{c|}{Discrete Delta Hedging}\\
\hhline{|~||-|-|-|-|-|}
&\multicolumn{1}{c|}{$\kappa=0$}&\multicolumn{1}{r|}{$\phantom{05}\kappa=0\phantom{0\,0}$}&\multicolumn{1}{c|}{$\kappa=0.005$}&\multicolumn{1}{r|}{$\phantom{05}\kappa=0\phantom{0\,0}$}&\multicolumn{1}{c|}{$\kappa=0.005$}\\
\hline\hline
$p=1$&\multicolumn{1}{r|}{$0.93\phantom{0\,0}$}&\multicolumn{1}{r|}{$0.89\phantom{0\,0}$}&\multicolumn{1}{r|}{$0.88\phantom{0\,0}$}&\multicolumn{1}{r|}{$0.92\phantom{0\,0}$}&\multicolumn{1}{r|}{$0.88\phantom{0\,0}$}\\
$p=1.5$&\multicolumn{1}{r|}{$0.89\phantom{0\,0}$}&\multicolumn{1}{r|}{$0.87\phantom{0\,0}$}&\multicolumn{1}{r|}{$0.87\phantom{0\,0}$}&\multicolumn{1}{r|}{$0.91\phantom{0\,0}$}&\multicolumn{1}{r|}{$0.88\phantom{0\,0}$}\\
$p=5$&\multicolumn{1}{r|}{$0.80\phantom{0\,0}$}&\multicolumn{1}{r|}{$0.80\phantom{0\,0}$}&\multicolumn{1}{r|}{$0.80\phantom{0\,0}$}&\multicolumn{1}{r|}{$0.80\phantom{0\,0}$}&\multicolumn{1}{r|}{$0.80\phantom{0\,0}$}\\
\hline
\end{tabular}
\vspace{1em}
\caption{Selected empirically derived characteristics of the terminal wealth distribution for $p\in\{1,1.5,5\}$ and $\kappa\in\{0,0.005\}$. The success rate $\P\left[V_T\geq H\right]$ is the counter probability of the shortfall risk. The success ratio is the generalized success rate
$\mathbb{E}\left[\i_{\{V_T\geq H\}}+\frac{V_T}{H}\i_{\{V_T<H\}}\right]$
as defined in \cite[][Definition (2.32)]{follmer_efficient_2000}. 
\\ 
Not only does Deep Hedging yield a flatter right tail in the presence of transaction costs - as can be deduced from the figures for the $5\%$-Quantile - Deep Hedging moreover provides a superior success rate, and can keep up with the success ratio of discrete delta hedging. 
}
\label{tbl:lineup}
\end{table}

\section{Conclusion \& Outlook} 
We have discussed two approaches to goal-based investing in this article. 
The first - analytical - approach yields several explicit continuous dynamic trading strategies that risk-taking, risk-neutral, and risk-averse investors need to implement to maximize their goal-based utilities. 

In the real world, however, continuous-time trading is not feasible. 
We show that this drawback can be addressed with a more flexible deep-hedging approach. 
Not only is this approach well-suited for \textit{discrete} re-balancing, it also allows for the inclusion of transaction costs. 
Curiously, goal-based investing provides a use case for Deep Hedging with a probability-maximizing objective function, due to the problem's equivalence with efficient hedging. 
\\

There are many ramifications of our work on hedging goals that we will investigate elsewhere. 
In particular, open research questions that we will address include:

\begin{itemize}
    \item hedging goals under general market dynamics, e.g., GARCH~\cite{ghalanos_rmgarch_nodate}, or scenarios generated with Generative Adversarial Networks (GAN, cf.~\cite{ni_conditional_2020});
    \item hedging goals with downward protection in the spirit of Section~\ref{sec:down_prot}; 
    \item hedging goals with exogenous income \cite[][Section 7]{browne_reaching_1999} \textit{and} liabilities \cite{browne_survival_1997}; 
    \item beating stochastic benchmarks using deep learning \cite[][Subsection 8.1]{browne_reaching_1999}.
\end{itemize}

\appendix
\section{Proofs} \label{sec:proo}

\subsection{Proofs of the Results with Risk Neutrality and Risk Taking ($p\in[0, 1]$)}

\begin{proof}[Proof of Proposition~\ref{prop:effhed_ndim}]
\cite[][Theorem 9]{leukert_absicherungsstrategien_1999} states the test function 
\begin{align}\label{eq:lem5}
    \varphi_p &= \mathbbm{1}_{\left\{ \frac{\d \mathbb{P}\phantom{^*}}{\d \mathbb{P}^*}  \geq a_p \,H^{1-p} \right \}}, 
\end{align} 
needs to be used to modify the claim $H$. 
Here, $a_p$ is determined implicitly by the capital requirement 
$z = \mathbb{E}^*[\varphi_p\, H]$. To avoid trivial cases, let us assume that $z$ lies within the open interval $\left(0, H_{0, T} \right)$. 
It is straightforward to show that the constant $a_p$ is given by 
\begin{align*}
    a_p &= H^{p-1} \exp\left\{ \sigma_*\, \sqrt{T}\, \Phi^{-1}\left( 1-\frac{z}{H_{0, T}} \right) - \frac{1}{2}{\sigma_*}^2 \ T \right\}. 
\end{align*} 
Let us introduce the density process $(Z_t)_{t\geq 0}$ as
\begin{align}\label{eq:density_process}
    Z_t &= \exp\left\{ - {\pmb{\vartheta}}^\top \, \pmb{W}_t^* + \frac{1}{2} {\sigma_*}^2 \, t \right\}.
\end{align}
Note that $Z_T = \rho_*$, cf. \eqref{eq:radon}.
The density process and the optimal growth portfolio are related via
\begin{align*}
    \log Z_t &= - \left( \log \frac{\Pi_t}{\Pi_0} - \left(r - \frac{1}{2} {\sigma_*}^2\right) t \right) + \frac{1}{2} {\sigma_*}^2 \ t. 
\end{align*}
With these notations, we can show that the value process corresponds to a digital European call option, namely, 
\begin{align*}
    \mathbb{E}^*[\varphi_p H \, | \, \mathcal{F}_t]
    &= \mathbb{P}^*\left[ Z_T \leq \frac{H^{p-1}}{a_p} \, \bigg \rvert \, \mathcal{F}_t \right]
    \\
    &= \mathbb{P}^*\left[ \frac{Z_T}{Z_t} \leq \frac{H^{p-1}}{a_p\ Z_t} \, \bigg \rvert \, \mathcal{F}_t \right]
    \\
    &= \mathbb{P}^*\left[{\pmb{\vartheta}}^\top \left(\pmb{W}_T^*-\pmb{W}_t^*\right) \geq - \left(\log\frac{H^{p-1}}{a_p \, Z_t}-\frac{1}{2}{\sigma_*}^2\, (T-t)\right)\right]
    \\
    &= 1-\Phi\left( \frac{\log a_p - (p-1) \log H + \frac{1}{2}{\sigma_*}^2 \,(T-t) - \left( \log\frac{\Pi_t}{\Pi_0} -\left(r-\frac{1}{2} {\sigma_*}^2 \right) t \right) + \frac{1}{2}{\sigma_*}^2 \,t}{\sigma_* \,\sqrt{T-t}} \right)
    \\
    &= 1-\Phi\left( \frac{ \Phi^{-1}\left(1-\frac{ z}{H_{0, T}}\right)\sigma_* \,\sqrt{T} - \left( \log\frac{\Pi_t}{\Pi_0} -\left(r-\frac{1}{2} {\sigma_*}^2\right) t \right)}{\sigma_* \,\sqrt{T-t}} \right)
    \\
    &= \Phi\left( \frac{\log\frac{\Pi_t}{\Pi_0}  - \left( r-\frac{1}{2} {\sigma_*}^2\right) t - \Phi^{-1}\left(1-\frac{z}{H_{0, T}}\right)\sigma_*\,\sqrt{T}}{\sigma_* \, \sqrt{T-t}} \right)
    \\
    &= \Phi\left( \frac{\log\frac{\Pi_t}{K^*} + \left(r-\frac{1}{2}{\sigma_*}^2\right)(T-t)}{\sigma_*\, \sqrt{T-t}} \right), 
\end{align*}
where the strike $K^*$ is given by 
\begin{align*}
    \log K^* &= \log \Pi_0 + \left(r-\frac{1}{2}{\sigma_*}^2\right)\,T-\sigma_*\,\sqrt{T}\,\Phi^{-1}\left( \frac{z}{H_{0, T}} \right).
\end{align*}
This furthermore shows that the solution for the multivariate problem of minimizing the expected shortfall is identical to the one derived by Browne in the case of maximizing the probability of reaching an investment goal \cite{browne_reaching_1999}.
\end{proof}

\begin{proof}[Proof of Corollary~\ref{cor:effhed_1dim}]
By virtue of \eqref{eq:radon}, we can express the test function 
$\varphi_p$ as the indicator function
\begin{align*}
    \varphi_p
    &= \mathbbm{1}{\left\{ \rho_*  \leq \frac{H^{p-1}}{a_p} \right \}} 
    = \mathbbm{1}{\left\{ \exp\left\{ \frac{1}{2}\vartheta^2\, T-\vartheta\,W_T^* \right\} \leq \frac{H^{p-1}}{a_p} \right \}} 
    \\
    &= \mathbbm{1}{\left\{ W_T^* \geq \frac{1}{\vartheta}\left( \frac{1}{2}\vartheta^2\, T + \log a_p + (1-p) \log H \right)  \right \}}. 
\end{align*}
Hence, for a standard normal random variate $Y$, 
\begin{align*}
   R_{T, 0}\, z &= R_{T, 0}\, \mathbb{E}^*[\varphi_p H]
    = H \, \mathbb{P}^*\left[ \sqrt{T}\, Y \geq \frac{1}{\vartheta}\left( \frac{1}{2}\vartheta^2\, T + \log a_p + (1-p) \log H\right) \right]
    \\
    &= H \left( 1 - \Phi\left(\frac{\frac{1}{2}\vartheta^2\, T+\log a_p + (1-p) \log H  }{\vartheta \, \sqrt{T}}\right)\right). 
\end{align*}
Thus, 
\begin{align*}
    a_p &= H^{p-1} \exp\left\{ \vartheta \,\sqrt{T}\Phi^{-1}\left(1-\frac{z}{H_{0, T}}\right) - \frac{1}{2}\vartheta^2\, T \right\}. 
\end{align*}
It can be shown that $\rho_* = k \ {X_T}^{-\alpha}$, for a real constant $k$. 
In fact, 
\begin{align*}
    {X_T}^{-\alpha} &= {x_0}^{-\alpha}\ \exp\left\{ - \alpha \left(\mu - \frac{\sigma^2}{2}\right) T -\alpha\sigma W_T \right\} 
    \\
    &= {x_0}^{-\alpha} \exp\left\{ - \alpha \left( \frac{\mu - r}{2} + \frac{ \mu + r - \sigma^2}{2}\right) T -\vartheta\, W_T \right\}
    \\ 
    &= {x_0}^{-\alpha} \exp\left\{ - \frac{\alpha(\mu + r - \sigma^2)T}{2} \right\} \underbrace{\exp\left\{- \frac{1}{2}\vartheta^2\, T  -\vartheta W_T \right\}}_{=\rho^*}, 
\intertext{and hence}
    k &= {x_0}^\alpha \exp\left\{ \frac{\alpha(\mu + r - \sigma^2)T}{2} \right\}.
\end{align*}
The test function $\varphi_p$ in \eqref{eq:lem5} can therefore be rewritten as 
\begin{align*}
    \varphi_p 
    &= \mathbbm{1} \left\{ k\, {X_T}^{-\alpha} \leq \frac{H^{p-1}}{a_p} \right\}
    = \mathbbm{1}\left\{ X_T \geq \sqrt[\alpha]{k\, a_p\, H^{1-p}} \right\}
    \\
    &= \mathbbm{1}\left\{ X_T \geq x_0 \exp\left\{ \frac{\mu+r-\sigma^2}{2}\,T + \Phi^{-1}\left(1-\frac{ z}{H_{0, T}}\right) \sigma\,\sqrt{T} - \frac{\mu - r}{2}\,T \right\} \right\}
    \\
    &= \mathbbm{1}\left\{ X_T \geq x_0 \exp\left\{ \left( r - \frac{\sigma^2}{2}\right) T + \Phi^{-1}\left(1-\frac{ z}{H_{0, T}}\right) \sigma\,\sqrt{T} \right\} \right\}
    \\
    &= \mathbbm{1}\left\{ X_T \geq x_0 \exp\left\{ \left( r - \frac{\sigma^2}{2}\right) T - \Phi^{-1}\left(\frac{ z}{H_{0, T}}\right) \sigma\,\sqrt{T} \right\} \right\}.
\end{align*}
\\
Let $Z_T := \rho_*$, with the density process $Z=(Z_t)_{t\in [0, T]}$ defined as
\begin{align*}
    \log Z_t &= -\frac{\vartheta}{\sigma} \left(\log\frac{X_t}{x_0} -\left(r- \frac{1}{2}\sigma^2\right) t\right) + \frac{1}{2}\vartheta^2  t. 
\end{align*}
Then 
we have that (cf.~\cite[][Corollary 2.8]{xu_minimizing_2004})
\begin{align*}
    \mathbb{E}^*[\varphi_p H \, | \, \mathcal{F}_t]
    &= \mathbb{P}^*\left[ Z_T \leq \frac{H^{p-1}}{a_p} \, \bigg \rvert \, \mathcal{F}_t \right]
    \\
    &= \mathbb{P}^*\left[ \frac{Z_T}{Z_t}\, Z_t \leq \frac{H^{p-1}}{a_p} \, \bigg \rvert \, \mathcal{F}_t \right]
    \\
    &= \mathbb{P}^*\left[ \frac{Z_T}{Z_t} \leq \frac{H^{p-1}}{a_p\ Z_t} \, \bigg \rvert \, \mathcal{F}_t \right]
    \\
    &= \mathbb{P}^*\left[{W_T}^*-{W_t}^*\geq-\frac{1}{\vartheta}\left(\log\left(\frac{H^{p-1}}{a_pZ_t}\right)-\frac{1}{2}\vartheta^2(T-t)\right)\right]
    \\
    &= 1-\Phi\left( \frac{\log a_p + (1-p)\log H + \frac{1}{2}\vartheta^2 (T-t) - \frac{\vartheta}{\sigma}\left( \log\frac{X_t}{x_0} -\left(r- \frac{\sigma^2}{2}\right) t \right) + \frac{1}{2}\vartheta^2 t}{\vartheta \sqrt{T-t}} \right)
    \\
    &= 1-\Phi\left( \frac{ \Phi^{-1}\left(1-\frac{ z}{H_{0, T}}\right)\vartheta \sqrt{T} - \frac{\vartheta}{\sigma}\left( \log\frac{X_t}{x_0} -\left(r- \frac{\sigma^2}{2}\right) t \right)}{\vartheta \,\sqrt{T-t}} \right)
    \\
    &= \Phi\left( \frac{\log\frac{X_t}{x_0}  - \left( r- \frac{\sigma^2}{2}\right) t - \Phi^{-1}\left(1-\frac{ z}{R_{0, T}\,H}\right)\sigma\sqrt{T}}{\sigma \sqrt{T-t}} \right)
    \\
    &= \Phi\left( \frac{\log\frac{X_t}{K^*} + \left(r-\frac{\sigma^2}{2}\right)(T-t)}{\sigma \sqrt{T-t}} \right), 
\end{align*}
where $K^*$ is given by 
\begin{align*}
    \log K^* = \log x_0 + \Phi^{-1}\left(1-\frac{z}{H_{0, T}}\right) \sigma \,\sqrt{T} + \left(r-\frac{\sigma^2}{2}\right) T. 
\end{align*}
The modified claim $\varphi_p H$ thus corresponds to a digital call option with strike $K^*$; cf.~\eqref{eq:bs_price_digital}.
\end{proof}
\begin{remark}
Note that, as $z$ approaches $0$, the inverse cumulative distribution function diverges to $+\infty$, so that $K^*$ tends to $\infty$ and, as a consequence, the (initial) value of the modified claim $V_t$ vanishes.

Conversely, as $z$ approaches $H_{0, T}$ from below, $K^*$ diverges to $-\infty$, so that $\varphi_p \rightarrow \mathbbm{1}_{\R_+}$: in the limit, the modified claim coincides with the original one. 
\end{remark}

\subsection{Proofs of the Results with Risk Aversion}

\begin{proof}[Proof of Proposition~\ref{prop:lowmom_ndim}]
Recall that, in the case of increasing risk aversion, we need to consider the problem \eqref{eq:modifH}. 
For this purpose, we note that the density process $(Z_t)_{t\in[0, T]}$ (cf. \eqref{eq:density_process}) relates to the optimal-growth portfolio via
\begin{align*}
    Z_T &= \rho_* = \frac{\Pi_0}{R_{0, T}\,\Pi_T}. 
\end{align*}
The modified claim of \eqref{eq:modifH} thus takes the form 
\begin{align*}
    \varphi_p &= \left(1-a_p\left(\frac{\Pi_0}{R_{0, T}\,\Pi_T}\right)^{p'}\right)_+, 
\end{align*}
where we have used the shorthand $p'=1/(p-1)$. 
This equation in turn can be rewritten as 
\begin{align*}
    \varphi_p &= \left(1-\left(\frac{L}{\Pi_T}\right)^{p'}\right)\, \mathbbm{1}_{\{ \Pi_T \geq L \}}, 
\end{align*}
where the threshold is given by $L:=\sqrt[p']{a_p}\, R_{T, 0}\, \Pi_0$.
This claim consists of a European digital option that is modified by a factor. 
The difference now, however, is that the digital option is a contingent claim on the optimal growth portfolio, whose wealth at time $t$ is given by $\Pi_t$.
\\
Calculations analogous to those in the case of a single risky asset (cf.~the proof of Corollary~\ref{cor:lowmom_1dim} below) show that the modified claim on the optimal-growth portfolio takes the form specified in Proposition \ref{prop:lowmom_ndim}.
\end{proof}
\begin{proof}[Proof of Corollary~\ref{cor:lowmom_1dim}]
The modified claim \eqref{eq:modifH} in this case reads as 
$$
\varphi_p = \left(
1-a_p\, {\rho_*}^{p'}\right)_+.
$$
Recall from the proof of Corollary~\ref{cor:effhed_1dim} that
$$
\rho_* = \frac{k}{{X_T}^{\alpha}}, 
$$
where 
\[
    k = {x_0}^\alpha\exp\left\{ \frac{\alpha (\mu + r - \sigma^2)T}{2}\right\}. 
\]
Therefore, 
\begin{align}\label{eq:expo}
    \varphi_p = \left( 1 - a_p\frac{k^{p'}}{ {X_T}^{\alpha p'}}\right)\, \mathbbm{1}_{\left\{X_T \geq {a_p}^\frac{p-1}{\alpha}k^{\frac{1}{\alpha}}\right\}}. 
\end{align}
Let us denote the threshold by $L := {a_p}^\frac{p-1}{\alpha}k^{\frac{1}{\alpha}}$. 
Thus Equation~\eqref{eq:expo} can be rewritten as 
\begin{align}\label{eq:fraclx}
    \varphi_p = \left( 1-\left(\frac{L}{X_T}\right)^{\alpha_p} \right)\, \mathbbm{1}_{\{ X_T \geq L \}}. 
\end{align}
Defining the function
    $f_p(y) := \left( 1-\frac{L^{\alpha_p}}{y^{\alpha_p}} \right)\, \mathbbm{1}_{\{ y \geq L \}}$
for $y \in \R$, we set 
\begin{align*}
    V_t 
    &= \mathbb{E}^*[\varphi_p H\,|\, \mathcal{F}_t] 
    = H_{T,t} \, \mathbb{E}^*[f_p(X_t\, \exp[\sigma\, (W_T^*-W_t^*) + (r-\sigma^2/2)\,(T-t)\,|\, \mathcal{F}_t]
    \\ 
    &=: H_{T, t}\, F_p(t, X_t), 
\end{align*}
so that, for $\tau:=T-t$, 
\begin{align}\label{eq:digiCall-rem}
    H_{T, t}\,F_p(t, x) 
    &= \int_\R f_p(x\exp[\sigma \sqrt{\tau}\, y + (r-\sigma^2/2)\, \tau])\, \exp(-y^2/2)\frac{\d y}{\sqrt{2\pi}}
    \nonumber
    \\
    &= \Phi(d_-(t; x, L)) - \frac{L^{\alpha_p}}{x^{\alpha_p}} \int_{-d_-(t; x, L)}^\infty \exp \left\{ -\alpha_p(\sigma\sqrt{\tau} y + (r-\sigma^2/2)\tau) \right\}\, e^\frac{-y^2}{2}\frac{\d y}{\sqrt{2\pi}}
    \nonumber
    \\ 
    &= \Phi(d_-(t; x, L)) - \frac{L^{\alpha_p}}{x^{\alpha_p}}
    \exp\left\{  \alpha_p \left(\alpha_p+1\right) (\sigma^2/2-r) \tau \right\} \Phi\left( d_-(t; x, L)-\alpha_p \,\sigma \,\sqrt{\tau} \right). 
\end{align}
The threshold $L$ is determined implicitly by the initial endowment $z$ via
\begin{align*}
    z &= \mathbb{E}^*[\varphi_p H] = H_{0, T}\,F_p(0, x_0) 
    \\
    &= H_{0,T}\, \left(\Phi(d_-(0; x_0, L)) -  \frac{L^{\alpha_p}}{x_0^{\alpha_p}}\, \exp\left\{ \alpha_p\left(\alpha_p+1\right) (\sigma^2/2-r) \,T \right\}\, \Phi\left( d_-(0; x, L)-\alpha_p \,\sigma\, \sqrt{T} \right)\right). 
\end{align*}
\end{proof}

\subsection{Proofs of the Results with Downward Protection}
\begin{proof}[Proof of Proposition~\ref{prop:down_prot}]
This follows from applying the results in~\cite[][Section 8]{browne_reaching_1999}. 
\\
\end{proof}
\printbibliography[title={References}]

\end{document}